\DeclareMathOperator*{\rank}{rank}
\numberwithin{table}{section}
\numberwithin{equation}{section}
\numberwithin{algorithm}{section}
\newcommand{\purple}[1]{\textcolor{violet}{#1}}
\newcounter{fixcounter}
\numberwithin{fixcounter}{section}
\newcommand{\fixme}[2]{\stepcounter{fixcounter} \purple{FIXME \thefixcounter \  (#1): #2}   \addtostream{fixmes}{\noexpand FIXME \thefixcounter  (#1): #2}}
\newcommand{\fixed}[2]{\stepcounter{fixcounter} \addtostream{fixmes}{\noexpand FIXME \thefixcounter: #1: #2}}
\definecolor{darkgreen}{rgb}{0.2,0.6,0.2}
\definecolor{darkred}{rgb}{0.6,0.2,0.2}
\newcommand{\mathsmall}{}
\newcommand{\mathfootnotesize}{}
\def \ISTHESIS {FALSE}
\def \includecommcost {FALSE}
\newcommand*{\cachesize}{\ensuremath{H}}
\newcommand*{\cs}{\ensuremath{H}}
\newcommand*{\bwcost}{\ensuremath{Q}}
\newcommand{\commcost}[1]{\ifthenelse{\equal{\includecommcost}{TRUE}}{#1}{}}
\newcommand{\notcommcost}[1]{\ifthenelse{\equal{\includecommcost}{FALSE}}{#1}{}}
\newcommand{\onlyinthesis}[1]{\ifthenelse{\equal{\ISTHESIS}{TRUE}}{#1}{}}
\newcommand{\notinthesis}[1]{\ifthenelse{\equal{\ISTHESIS}{TRUE}}{}{#1}}
\newtheorem{algorithmdef}{Algorithm}[section]
\newcommand{\nsctr}[5]{{{#1}} \otimes_#5 {{#2}}}
\newcommand{\syctr}[5]{{{#1}} \circ_#5 {{#2}}}
\newcommand{\nsalg}[5]{\Upsilon^{(#3,#4,#5)}({{#1}},{{#2}})}
\newcommand{\bilalg}[1]{\Lambda^{(\mathcal{#1})}}
\newcommand{\systalg}[5]{\Psi_\circ^{(#3,#4,#5)}({{#1}},{{#2}})}
\newcommand{\syfsalg}[5]{\Phi_\circ^{(#3,#4,#5)}({{#1}},{{#2}})}
\newcommand{\nnsalg}[3]{\Upsilon^{(#1,#2,#3)}}
\newcommand{\nsystalg}[3]{\Psi_\circ^{(#1,#2,#3)}}
\newcommand{\nsyfsalg}[3]{\Phi_\circ^{(#1,#2,#3)}}
\newcommand{\nbilalg}[5]{\hat{\Lambda}_{(\mathcal{#4},\mathcal{#5})}^{(#1,#2,#3)}}
\newcommand{\getmat}[1]{\langle{#1}\rangle}
\newcommand{\BB}[1]{\B{\bar{#1}}}
\newcommand{\lt}{\left}
\newcommand{\rt}{\right}
\newcommand{\tpl}[2]{{\B{#1}}}
\newcommand{\inti}[2]{[{#1},{#2}]}
\newcommand{\bunderline}[1]{\mkern1mu\underline{\mkern-1mu#1\mkern-1mu}\mkern1mu }
\newcommand{\enti}[3]{\bunderline{\angle}[{#1},{#2}]^{#3}}
\newcommand{\lnti}[3]{\angle[{#1},{#2}]^{#3}}
\newcommand{\chchoose}[2]{\lt(\!\!\binom{#1}{#2}\!\!\rt)}
\newcommand{\tchchoose}[2]{\big(\!\!\binom{#1}{#2}\!\!\big)}
\newcommand{\seqarch}[1]{\mathfrak{N}(#1)}
\newcommand{\pararch}[2]{\mathfrak{M}(#1)}
\newcommand{\Ce}[1]{\lowercase{#1}}
\newcommand{\C}[1]{\bm{\mathcal{#1}}}
\newcommand{\bFA}{\B{{F}}^{(A)}}
\newcommand{\bFB}{\B{{F}}^{(B)}}
\newcommand{\bFC}{\B{{F}}^{(C)}}
\newcommand{\bhRA}{\B{{R}}^{(A)}}
\newcommand{\bhRB}{\B{{R}}^{(B)}}
\newcommand{\bhRC}{\C{{R}}^{(C)}}
\newcommand{\defeq}{\coloneqq}
\newcommand{\B}{\bm}
\newcommand{\bprt}[2]{\bar{\chi}^{#1}_{#2}}
\newcommand{\prt}[2]{\chi^{#1}_{#2}}
\newcommand{\prj}[1]{\chi^{#1}}
\newcommand{\perms}[1]{\mathfrak{G}_{#1}}
\newcommand{\rev}[1]{#1}
\title{Communication Lower Bounds of Bilinear Algorithms for Symmetric Tensor Contractions
}
\author{Edgar Solomonik\thanks{Department of Computer Science, University of Illinois at Urbana-Champaign, USA} 
  (\email{solomon2@illinois.edu})
\and James Demmel\thanks{Department of EECS (Computer Science Division) and Department of Mathematics, University of California, Berkeley, USA}
  (\email{demmel@cs.berkeley.edu})
\and Torsten Hoefler\thanks{Department of Computer Science, ETH Zurich, Switzerland} (\email{htor@inf.ethz.ch})}
\begin{document}
\maketitle

\begin{abstract}
We introduce a new theoretical framework for deriving lower bounds on data movement in bilinear algorithms. Bilinear algorithms are a general representation of fast algorithms for bilinear functions, which include computation of matrix multiplication, convolution, and symmetric tensor contractions. A bilinear algorithm is described by three matrices. Our communication lower bounds are based on quantifying the minimal matrix ranks of matching subsets of columns of these matrices. This infrastructure yields new communication lower bounds for symmetric tensor contraction algorithms, which provide qualitative new insights. 
\rev{Tensor symmetry (invariance under permutation of modes) is common to many applications of tensor computations (e.g., tensor representation of hypergraphs, analysis of high order moments in data, as well as tensors modelling interactions of electrons in computational chemistry).}
\rev{Tensor symmetry enables reduction in representation size as well as arithmetic cost of contractions}
by factors that scale with the number of equivalent permutations. However, we derive lower bounds showing that these arithmetic cost and memory reductions can necessitate increases in data movement by factors that scale with the size of the tensors.

\end{abstract}

\section{Introduction}
\label{sec:intro}

Tensor contractions are tensor products that are summed (contracted) over a subset of the modes (indices).
They generalize the product of a matrix and a vector and of two matrices.
Symmetric tensors, like symmetric matrices, are invariant under permutation (transposition) of modes.
\rev{Tensor symmetry provides the opportunity to reduce memory footprint by storing only the unique elements in the tensors (e.g., storing only the upper triangular part of a symmetric matrix), a technique we refer to as {\it packed storage}.
Symmetry also enables reduced computational cost for tensor contractions, but this requires algebraic reorganization of the tensor contraction, which we refer to as the {\it symmetry preserving algorithm}~\cite{solomonikfast}.
A consequence of this reorganization is that the contraction can no longer be reduced to a product of matrices or vectors, a standard technique for symmetric and nonsymmetric tensor contractions~\cite{di2014towards,JCC:JCC23377,SMHD_IPDPS_2013,Rajbhandari:2014:CFC:2683593.2683635}, which we refer to as the {\it direct evaluation algorithm}.
Both of these approaches can be expressed using the formalism of {\it bilinear algorithms}~\cite{pan1984can}, a representation of fast algorithms for bilinear problems like matrix multiplication, convolution, and (symmetric) tensor contraction.
By applying a new framework for deriving communication lower bounds for bilinear algorithms to symmetric tensor contraction algorithms, we show that, relative to known upper bounds for the direct evaluation algorithm, the use of either packed storage or the symmetry preserving algorithm can induce asymptotic overhead in communication cost in certain settings.

Symmetric tensors arise in a variety of applications, including electronic structure methods in computational chemistry~\cite{Bartlett:1981:ARPC:CC}, analysis of higher-order moments in data~\cite{sherman2020estimating}, and Laplacian tensors of hypergraphs~\cite{chen2017fiedler}.
The use of packed storage and symmetry preserving tensor contraction algorithms is particularly beneficial when the tensor order (number of modes in the tensor) is high (the number of permutations grows factorially with the number of symmetric modes in the tensor).
Such cases are common in high-accuracy methods for electronic structure, such as the coupled cluster family of methods~\cite{Bartlett:1981:ARPC:CC}.
The symmetry preserving algorithm has been shown to reduce cost of variants of these methods by factors of 2-9X (the higher speed-ups achieved for higher-order methods, as these leverage higher-order tensors)~\cite{SD_ETHZ_2015}.
}


Our analysis targets the problem of
{\it symmetric contraction}, the contraction of two symmetric tensors and subsequent symmetrization of the result.
Symmetrization involves taking the sum of all permutations of the initial contraction product and results in a symmetric tensor (e.g., computing $\B A \B B + \B B \B A$, where $\B A$ and $\B B$ are symmetric matrices).
Electronic structure methods employ symmetrization, but additionally also involve antisymmetric tensors, antisymmetrization, as well as partial symmetries.
We consider exclusively symmetric tensors~\cite{comon2008symmetric} (otherwise referred to as fully symmetric or supersymmetric~\cite{doi:10.1137/07070111X}).
The relevant algorithms for antisymmetric tensors are similar in structure~\cite{SD_ETHZ_2015}, but extension of our results \rev{for symmetry preserving algorithms} to partially symmetric tensors is nontrivial.
\rev{
For partially symmetric tensors, symmetry preserving algorithms can be applied in alternative ways for a single contraction, and should be nested with standard contraction algorithms, resulting in bilinear algorithms with a more complicated structure.}



\rev{Bilinear algorithms are defined} by a set of products of linear combinations of \rev{two sets of} inputs and partial summations of these bilinear forms~\cite{pan1984can}.
\rev{A bilinear algorithm is described by three sparse matrices, each defining what linear combinations need to be computed for either one of the input sets or partial summations for the output.
The number of products computed by the bilinear algorithm is referred to as its {\it (bilinear) rank} and is a measure of the complexity of the algorithm.
}
We derive the communication requirements of any bilinear algorithm 
\rev{based on an expansion bound, i.e., an algorithm-specific function relating the ranks of submatrices of the three sparse matrices.}
Given this infrastructure, it suffices to derive expansion bounds for each bilinear algorithm to obtain communication lower bound results.
We leverage a generalization of the Loomis-Whitney inequality~\cite{loomis1949n,Tiskin98thedesign} to prove expansion bounds based on the sparsity structure of the matrices encoding each algorithm.

%
\rev{Communication cost lower bounds make it possible to ascertain optimality of schedules for a particular algorithm or family of algorithms.}
We associate an {\it execution DAG} (directed acyclic graph) with a particular specification of a bilinear algorithm (multiple execution DAGs are possible for a single bilinear algorithm).
We then consider potential sequential schedules and parallel schedules of execution DAGs.
We measure the {\it vertical communication cost} of a sequential schedule as the amount of data moved between memory and a single level of cache.
We measure the {\it horizontal communication cost} of a parallel schedule as the amount of data communicated between any one processor and the others.
Our horizontal communication lower bounds can be translated to the LogGP~\cite{Alexandrov:1995} model and the bulk synchronous parallel (BSP) model~\cite{valiant1990bridging}.

The topic of communication lower bounds has been studied extensively for numerical linear algebra operations.
For classical matrix multiplication, results have yielded lower bounds on vertical communication~\cite{Jia-Wei:1981:ICR:800076.802486} and horizontal communication~\cite{aggarwal1989communication,Tiskin98thedesign,irony_mm_lb04}.
These have been extended to Strassen's algorithm for matrix multiplication~\cite{Ballard:SPAA2011}, matrix factorizations~\cite{greygeneral2010}, and the matricized-tensor times Khatri-Rao product (MTTKRP)~\cite{ballard2018communication}.
These communication lower bound results work by analyzing a particular execution DAG or family of execution DAGs.

Using bilinear algorithms, we provide a framework for communication lower bound derivation for an algebraic specification of the algorithm, which automatically applies to a family of execution DAGs.
By working with the rank structure of the matrices encoding a bilinear algorithm, we obtain lower bounds that apply to any \rev{possible execution DAG} for the bilinear algorithm, including ones that use linear transformations to compress data so as to reduce communication.
The bilinear algorithm framework enables us to reproduce existing results as well as to obtain new lower bounds for symmetric tensor contractions, which are not amenable to existing techniques.
We give lower bounds on vertical and horizontal communication costs of any sequential or parallel schedule for the direct evaluation and symmetry preserving algorithms for symmetric tensor contractions.

\rev{We characterize symmetric tensor contractions using the integer parameters $n,s,t,v$, where $n$ is the dimension (size of index range) of each mode of the tensor, $s$ is the number of uncontracted modes in the first input tensor, $t$ is the number of uncontracted modes in the second input tensor, and $v$ is the number of contracted modes.
For example, the product of a symmetric matrix and a vector is given by $s{=}1,t{=}0,v{=}1$, a symmetrized product of vectors $\B a \B b^T + \B b \B a^T$ is given by $s{=}1,t{=}1,v{=}0$, while $s{=}t{=}v{=}1$ corresponds to the symmetrized product of two symmetric matrices $\B A \B B + \B B \B A$.
When one of $s$, $t$, or $v$ is zero, we say the contraction is {\it matrix-vector-like} because of its form in the nonsymmetric case, e.g., if $t=0$ it reduces to the product of an $n^s\times n^v$ matrix with a vector of dimension $n^v$ (when $v=0$ it reduces to an outer product).
Otherwise (if $s,t,v\geq 1$), we say the contraction is {\it matrix-matrix-like} (reduces to the product of an $n^s\times n^v$ matrix with a $n^v\times n^t$ matrix).

We consider existing characterizations of asymptotic communication cost (which provide matching lower and upper bounds for vertical and horizontal communication) for matrix-vector and matrix-matrix products (assuming the classical $O(n^3)$ algorithm for $n\times n$ matrices as opposed to Strassen's algorithm or another fast matrix multiplication algorithm).
With these existing results in mind, the lower bounds introduced in this paper yield the following theoretical contributions.
\begin{itemize}
\item
Packed storage necessitates asymptotically more horizontal communication (relative to storing the tensors as if they were nonsymmetric) for the direct evaluation algorithm for some matrix-vector-like contractions. 
Specifically, this overhead arises for contractions in which one of $s$, $t$, or $v$ is zero and the remaining two are unequal (e.g., for $s=2,t=0,v=1$, which in the nonsymmetric case corresponds to a product of an $n^{2}\times n$ matrix and a vector).
\item
The symmetry preserving contraction algorithm requires asymptotically more horizontal and vertical communication than the direct evaluation algorithm for some matrix-matrix-like contractions, in particular any $s,t,v\geq 1$ except when $s=t=v$.
\item
We derive lower bounds for nonsymmetric and symmetric contractions by application of a new general framework, which captures a family of execution DAGs for a bilinear algorithm automatically, and may be useful for deriving communication lower bounds for other bilinear algorithms.
\end{itemize}
}

We summarize the communication lower bound results obtained for different tensor contraction algorithms in Table~\ref{tab:lbsum}.
The first three columns ($s$, $t$, $v$) give the order of the tensor contraction operands: $s+v$ and $v+t$, as well as of the result, $s+t$.
The dimension (range of indices) of all modes of the tensors is given by $n$.
For example, row 3 ($s{=}1,t{=}1,v{=}0$) corresponds to the symmetrized outer product $\B a \B b^T + \B b \B a^T$ (which has the same costs as the product of a symmetric matrix and a vector, for which $s{=}1,t{=}0,v{=}1$), while row 8 ($s{=}t{=}v{=}1$) corresponds to the symmetrized product of two symmetric matrices ($\B A \B B + \B B \B A$).
The following three columns ($F_\Upsilon$, $F_\Psi$, $F_\Phi$) provide the number of multiplications (to leading order in $n$) needed for nonsymmetric contractions, the direct evaluation algorithm for symmetric contractions, and the symmetry preserving algorithm, respectively.
The arithmetic costs $F_\Phi$ is shown in green, because it can be much smaller than the arithmetic costs of the other algorithms (by a factor of up to $\omega! = 720$).

The last five columns in Table~\ref{tab:lbsum} list the asymptotic communication cost lower bounds (\rev{denoted by $f(n)$ instead of $\Omega(f(n))$ for brevity}), where $\bwcost$ is vertical communication ($\cs$ is the cache size) and $W$ is horizontal communication (with $p$ processors).
The vertical communication cost lower bounds for the nonsymmetric case and the direct evaluation algorithm are listed within one column, because \rev{these} are always asymptotically equivalent.
The table presents the lower bounds in simplified form with the use of the assumptions: $n\geq p\gg 1$ and $\cs \leq  n^2$.
These assumptions are sensible when $\omega=s+t+v$ is small.
The communication costs are shown in red when they are higher than for the nonsymmetric algorithm.

\begin{table*}[ht]
\centering
\caption{The table presents bilinear rank ($F$) and communication cost lower bounds ($\bwcost$--vertical and $W$--horizontal) for nonsymmetric and symmetric tensor contraction algorithms ($\Upsilon$--nonsymmetric, $\Psi$--direct evaluation, $\Phi$--symmetry preserving). 
Row 2 gives the general costs for matrix-vector-like contractions, and rows 3-6 list particular instances. 
Row 7 gives the general cost for matrix-matrix-like contractions, and rows 8-11 list particular instances.
The results are symmetric in permutation of $(s,t,v)$, so we focus on $s\geq t\geq v$.
Green coloring shows where improvements obtained via symmetry, while red coloring identifies overheads of exploiting it.
}
\renewcommand{\arraystretch}{1.5}
{
\footnotesize
\begin{tabular}{c|c|c||c|c|c||c|c||c|c|c}
$s$ & $t$ & $v$ & 
$F_\Upsilon$ & $F_\Psi$  & $F_\Phi$ & $\bwcost_{\Upsilon,\Psi}$ & $\bwcost_\Phi$ &  $W_\Upsilon$ & $W_\Psi$ & $W_\Phi$ 
\\
\hline
\hline
$\geq t$ & $\geq 1$ & $0$ &
$n^{\omega}$ & $\frac{n^{\omega}}{s!t!}$ & $\textcolor{darkgreen}{\frac{n^{\omega}}{\omega!}}$ &
$n^{\omega}$ & $n^{\omega}$ &
$\min\big(n^t,\frac{n^{\omega/2}}{p^{1/2}}\big)$ & $\textcolor{darkred}{\frac{n^{s}}{p^{s/\omega}}}$ & $\textcolor{darkred}{\frac{n^{s}}{p^{s/\omega}}}$
\\[.5ex] 
\hline
\hline
1 & 1 & 0 &
$n^2$ & $n^2$ & $\textcolor{darkgreen}{\frac{n^2}2}$ &
$n^2$ & $n^2$ & 
$\frac{n}{p^{1/2}}$ & $\frac{n}{p^{1/2}}$ & $\frac{n}{p^{1/2}}$  
\\[.5ex]
\hline
2 & 1 & 0 &
$n^3$ &  $\frac{n^3}{2}$ &  $\textcolor{darkgreen}{\frac{n^3}{6}}$ &
$n^3$ & $n^3$ & 
$n$ 
& $\textcolor{darkred}{\frac{n^{2}}{p^{2/3}}}$  & $\textcolor{darkred}{\frac{n^2}{p^{2/3}}}$  
\\[.5ex] 
\hline
3 & 1 & 0 &
$n^4$ & $\frac{n^4}{6}$ & $\textcolor{darkgreen}{\frac{n^4}{24}}$ &
$n^4$ & $n^4$ &
$n$ 
& $\textcolor{darkred}{\frac{n^{3}}{p^{3/4}}}$ & $\textcolor{darkred}{\frac{n^3}{p^{3/4}}}$ 
\\[.5ex]
\hline
2 & 2 & 0 &
$n^4$ & $\frac{n^4}{4}$ & $\textcolor{darkgreen}{\frac{n^4}{24}}$ &
$n^4$ & $n^4$ & 
$\frac{n^2}{p^{1/2}}$ & $\frac{n^2}{p^{1/2}}$ & $\frac{n^2}{p^{1/2}}$  
\\[.5ex] 
\hline
\hline
${\geq}t$ & ${\geq}v$ & ${\geq}1$ &
$n^{\omega}$ & $\frac{n^{\omega}}{s!t!v!}$ & $\textcolor{darkgreen}{\frac{n^{\omega}}{\omega!}}$ &
$\frac{n^{\omega}}{\cs^{1/2}}$ & $\textcolor{darkred}{\frac{n^\omega\cs}{\cs^{\frac{\omega}{s+t}}}}$ &
\multicolumn{2}{c|}{$
\min\Big(
n^{t{+}v}{,}
\frac{n^{\frac{s{+}t}{2}{+}v}}{p^{1/2}}{,}
\frac{n^{\frac{2\omega}3}}{p^{2/3}}
\Big)$ } & 
$\textcolor{darkred}{\frac{n^{s{+}t}}{p^{\frac{s{+}t}\omega}}}$
\\[1ex] 
\hline
\hline
1 & 1 & 1 &
$n^3$ & $n^3$ & $\textcolor{darkgreen}{\frac{n^3}{6}}$ & 
$\frac{n^3}{\cs^{1/2}}$ & $\frac{n^3}{\cs^{1/2}}$ & 
$\frac{n^2}{p^{2/3}}$ & $\frac{n^2}{p^{2/3}}$ & $\frac{n^2}{p^{2/3}}$  
\\[.5ex] 
\hline
2 & 1 & 1 &
$n^4$ & $\frac{n^4}{2}$ & $\textcolor{darkgreen}{\frac{n^4}{24}}$ &
$\frac{n^4}{\cs^{1/2}}$ & $\textcolor{darkred}{\frac{n^4}{\cs^{1/3}}}$ &
$n^2$ & $n^2$ & 
$\textcolor{darkred}{\frac{n^3}{p^{3/4}}}$  
\\[.5ex] 
\hline
2 & 2 & 1 & 
$n^5$ & $\frac{n^5}{4}$ & $\textcolor{darkgreen}{\frac{n^5}{120}}$ &
$\frac{n^5}{\cs^{1/2}}$ & $\textcolor{darkred}{\frac{n^5}{\cs^{1/4}}}$ &
$\frac{n^3}{p^{1/2}}$ & $\frac{n^3}{p^{1/2}}$ & 
$\textcolor{darkred}{\frac{n^4}{p^{4/5}}}$  
\\[.5ex] 
\hline
2 & 2 & 2 &
$n^6$ & $\frac{n^6}{8}$ & $\textcolor{darkgreen}{\frac{n^6}{720}}$ &
$\frac{n^6}{\cs^{1/2}}$ &  $\frac{n^6}{\cs^{1/2}}$ &
$\frac{n^{4}}{p^{2/3}}$ & $\frac{n^{4}}{p^{2/3}}$ & $\frac{n^4}{p^{2/3}}$  
\end{tabular}
}
\label{tab:lbsum}
\end{table*}

\section{Tensor Notation} 
\label{sec:defs}

We use the notation from the introductory work on the symmetry preserving algorithm~\cite{solomonikfast} with some modifications.
We additionally restrict all elements of tensors to be in the same algebraic ring.
More general definitions of elements and element-wise operations
in contractions enable the extension of the formalism and algorithms to partially symmetric contractions~\cite{SD_ETHZ_2015}.

Our notation departs from standard notation conventions used in the tensor decomposition literature~\cite{doi:10.1137/07070111X}, due to our need to work with variable-order tensors and contractions over arbitrary sets of modes.
We leverage the fact that different modes of a symmetric tensor are indistinguishable to keep our notation as concise and descriptive as possible.
Some basic conventions we employ include denoting tensors (including vectors and matrices) in bold font and denoting vectors with lower-case letters (variable-order tensors are denoted as upper-case letters even when they can be vectors).
However, elements of tensors (and of vectors) are denoted in regular (non-bold) font.

\begin{definition}
We denote a $d$-tuple of integers as $\tpl{i}{d} \defeq (i_1,\ldots, i_d)$. 
\end{definition}
These tuples will be used as tensor indices, and each will
most often range from $1$ to $n$, so $\tpl{i}{d}\in\inti{1}{n}^d$. We concatenate tuples using the notation
$\tpl{i}{d}\tpl{j}{f}\defeq(i_1,\ldots, i_d,j_1,\ldots, j_f)$ for any $\tpl jf \in \inti 1n^f$.
We combine ranges similarly $\inti 1n^d\inti 1n^f{=}\inti 1n^d{\otimes} \inti 1n^f{=}\inti 1n^{d+f}$.
\begin{definition}
We refer to the space of increasing $d$-tuples with values between $1$ and $n$ as $\enti{1}{n}{d}=\{\tpl{i}{d} : i_1\leq\ldots \leq i_d,  \tpl{i}{d}\in\inti 1n^d\}$.
We also refer to the space of strictly increasing tuples as $\lnti{1}{n}{d}=\{\tpl{i}{d} : i_1<\ldots < i_d,  \tpl{i}{d}\in\inti 1n^d\}$.
\end{definition}
The number of increasing $d$-tuples between $1$ and $n$ is given by $|\enti{1}{n}{d}|=\chchoose{n}{d}\defeq {n+d-1 \choose d}$.
The set of increasing tuples will be useful in our algorithms, as $\enti{1}{n}{d}$ enumerates the unique tensor entries of an order $d$ symmetric tensor (defined below).

\begin{definition}
A tensor $\C T$ with order $d$ and all dimensions $n$ is a collection (multiset), 
\[\C T\defeq \lt(\Ce T_\tpl{i}{d} : \tpl id \in \inti 1n^d\rt).\]
\end{definition}
We will usually consider tensors with all dimensions equal to $n$.
Given an order $d$ tensor $\C A$, we will refer to its elements using the notation $\Ce A_\tpl{i}{d} = \Ce A_{i_1,\ldots, i_d}$. 

\begin{definition}\label{def:nsctr}
For any $s,t,v\geq 0$ with $\omega\defeq s+t+v$, we denote a {\bf tensor contraction} over $v$ indices 
between tensor $\C A$ of order $s+v$ and tensor $\C B$ of order $v+t$, 
into tensor $\C C$ of order $s+t$ each with all dimensions equal to $n$ as 
\begin{align}
\C{C} =&\nsctr{\C A}{\C B}{s}{t}{v}  \defeq \forall \tpl{j}{s}\tpl{l}{t}\in\inti{1}{n}^{s+t}, \Ce C_{\tpl{j}{s}\tpl{l}{t}} = \sum_{\tpl{k}{v}\in \inti 1n^v} \Ce A_{\tpl{j}{s}\tpl{k}{v}}\cdot \Ce B_{\tpl{k}{v}\tpl{l}{t}}.\label{eq:defctr} 
\end{align}
\end{definition}
Throughout further contraction definitions and algorithms we will always denote $\omega\defeq s+t+v$ and assume $n\gg \omega$.
We employ the notation $\nsctr{\C A}{\C B}stv$, since when $v=0$, the operator $\otimes_0$ is equivalent to the tensor product, which is commonly denoted as $\otimes$.
\begin{definition}\label{def:tnslike}
We refer to contractions with exactly one of $s,t,v$ is zero, as {\it matrix-vector-like} and contractions with $s,t,v>0$ as {\it matrix-matrix-like}.
\end{definition}

We define symmetric tensors and symmetrized contractions by considering all possible permutations of their indices.
For this task, we introduce the following permutation notation.
\begin{definition}\label{def:perms}
Let $\perms{d}$ be the set of all possible $d$-dimensional permutation functions,
where each $\pi\in \perms{d}$ \rev{is described by a bijection 
$\hat{\pi} : \inti{1}{d} \to \inti{1}{d}$}, as 
$\pi(\tpl{i}{d})\defeq(i_{\hat{\pi}(1)},\ldots, i_{\hat{\pi}(d)})$, so
$|\perms{d}|=d!$. Denote the 
collection of all permutations of a tuple $\tpl{i}{d}$ as
$$\perms{}(\tpl{i}{d})\defeq \lt(\pi(\tpl{i}{d}) : \pi \in \perms{d}\rt).$$
\end{definition}

\begin{definition}\label{def:symmetric}
We say an order-$d$ tensor $\C{T}$ with all dimensions equal to $n$ is symmetric if 
\[\forall \tpl{i}{d}\in \inti{1}{n}^d,\tpl jd \in \perms{}(\tpl id),\quad \Ce T_\tpl{i}{d} = \Ce T_\tpl jd.\]
\end{definition}
According to Definition~\ref{def:symmetric}, scalars and vectors are symmetric tensors of order 0 and 1, respectively.
\begin{definition}\label{def:fulsym}
For any $s,t,v\geq 0$, a {\bf symmetric contraction} 
is a contraction between symmetric tensors $\C A$ and $\C B$ into $\C C$, where the result is symmetrized, i.e.,
$\C C =  \syctr{\C A}{\C B}{s}{t}{v}$, implies
\begin{align}
&\forall \tpl{i}{s+t}\in\inti{1}{n}^{s+t}, 
 \Ce C_{\tpl{i}{s+t}}=\sum_{\tpl{j}{s}\tpl{l}{t} \in \perms{}(\tpl{i}{s+t})} \bigg(\sum_{\tpl{k}{v}\in \inti 1n^v} \Ce A_{\tpl{j}{s}\tpl{k}{v}}\cdot \Ce B_{\tpl{k}{v}\tpl{l}{t}}\bigg).\label{eq:symtrctr} 
\end{align}
\end{definition}
The resulting tensor $\C C$ satisfying \eqref{eq:symtrctr} is always symmetric. 
For ($s=1,t=0,v=1$), \eqref{eq:symtrctr} corresponds to the product of a symmetric matrix $\B A$ with a vector $\B b$,
$\syctr{\B A}{\B b}{1}{0}{1} \defeq \B A\B b$. 
For ($s=1,t=1,v=0$),
\eqref{eq:symtrctr} becomes the rank-two vector outer product of a column vector $\B a$ and a row vector
$\B b$, $\syctr{\B a}{\B b}{1}{1}{0}\defeq \B a \B b+{\B b}^\mathsf{T} {\B a}^\mathsf{T}$.
These two vector routines are members of the BLAS~\cite{lawson1979basic} and are building blocks
in a multitude of numerical routines.
For ($s=1,t=1,v=1$),
\eqref{eq:symtrctr} becomes symmetrized multiplication of symmetric $n\times n$ matrices $\B A$ and $\B B$, 
%
$\B{C}=\syctr{\B A}{\B B}{1}{1}{1}\defeq \B{A}\B{B}+\B{B} \B{A}$. 
Our definition of symmetric contractions can be extended to scenarios where the operands and/or the result are partially symmetric via nested tensors~\cite{SD_ETHZ_2015}. 

While we will define symmetrization in contractions as summing over all possible permutations of the tensor indices (for any $\tpl{i}{d}$ the collection $\perms{}(\tpl{i}{d})$),
our algorithms will exploit the equivalence of many of these permutations.
As a result, they will need to sum over a set of partitions rather than a full set of permutations, as defined below.
\begin{definition}\label{def:part}
We define the {\bf disjoint partition} 
$\prt{p}{q}(\tpl{k}{r})$ as the collection of all pairs of tuples of size $p$ and $q$,
which are disjoint subcollections of $\tpl{k}{r}$ and preserve the ordering of elements in $\tpl{k}{r}$.
We additionally define $\bprt{p}{q}(\tpl{k}{r})$ as the set of all unique pairs in $\prt{p}{q}(\tpl{k}{r})$.
\end{definition}

In other words, if $k_i$ and $k_j$ appear in the same tuple (partition) and $i<j$, then $k_i$ must appear before $k_j$. 
For example, the possible ordered partitions of $\tpl{k}{3}=(k_1,k_2,k_3)$ into
pairs of tuples of size one and two are the collection,
$$\prt 12(\tpl{k}{3})=\big[(k_1,(k_2,k_3)), (k_2,(k_1,k_3)), (k_3,(k_1,k_2))\big].$$
The collection $\prj{p}_q(\tpl{k}{r})$ can be constructed inductively~\cite{SD_ETHZ_2015}. 
The set $\bprt{p}{q}(\tpl{k}{r})$ will be used whenever we want to exclude equivalent pairs in $\prj{p}_q(\tpl{k}{r})$ (these exist only when $\tpl kr$ has repeating entries).

\begin{definition}
\label{def:prj}
We denote all possible ordered subcollections of tuple $\tpl{k}{d+f}\in\inti{1}{n}^{d+f}$ via the projection map $\prj{d}$,
\[\prj{d}(\tpl{k}{d+f})\defeq \lt(\tpl{i}{d} : (\tpl{i}{d},\tpl{j}{f})\in \prt df(\tpl{k}{d+f})\rt).\]
\end{definition}

In certain cases, our algorithms compute summations over groups of indices of symmetric tensors by summing only over the unique values and scaling by the following multiplicative factor. 
\begin{definition}\label{def:symsum}
Let $\rho(\tpl{k}{v})\defeq v!/\prod_{i=1}^l m_i!$ where $m_i$ is the multiplicity of the $i$th of  $1\leq l \leq v$ unique values in $\tpl{k}{v}$.
\end{definition}
The factor $\rho(\tpl{k}{v})$ is the number of unique permutations of $\tpl kv$, i.e., unique \rev{tuples} in the collection $\perms{}(\tpl kv)$ \rev{(if $\tpl kv$ has repeating values, some of its permutations will result in the same tuple).}

\section{Bilinear Algorithms for Tensor Contractions}
\label{sec:algs}
Tensor contractions ($\nsctr{\C A}{\C B}stv$) and symmetric tensor contractions ($\syctr{\C A}{\C B}stv$) produce a set of bilinear forms (partial sums of products) of the elements of $\C A$ and $\C B$.
We will define direct algorithms that compute these bilinear forms naively by computing all unique products of input elements and accumulating them to the specified partial sums.
These algorithms follow directly from the algebraic definitions of $\nsctr{\C A}{\C B}stv$ and $\syctr{\C A}{\C B}stv$ given in Section~\ref{sec:defs}. 
We then consider algorithms that compute a smaller set of products of linear combinations of input elements and obtain the specified partial sums as linear combinations of these.
We then provide specifications of them as {\it bilinear algorithms}~\cite{pan1984can}, representing each algorithm as a 3-tuple of sparse matrices.

\subsection{Nonsymmetric Contraction Algorithm}
\label{subsec:dealg}

Nonsymmetric tensor contractions are reducible to matrix multiplication.
So, we first consider the trivial algorithm which contracts nonsymmetric tensors $\C A$ and $\C B$ by evaluating the products in \eqref{eq:defctr}, which corresponds to standard matrix multiplication.
\begin{algorithmdef}[${\nnsalg stv}$]\label{alg:nsctr}
For any tensor contraction $\C C=\nsctr{\C A}{\C B}stv$ 
we define $\C C =\nsalg{\C A}{\C B}stv$\ to evaluate the multiplications explicitly described by \eqref{eq:defctr} in Definition~\ref{def:nsctr}.
\end{algorithmdef}
\rev{Algorithm~\ref{alg:nsctr} is equivalent to a multiplication of a matrix $\BB A$ with dimensions $n^s\times n^v$ by a matrix $\BB B$ with dimensions $n^v\times n^t$ yielding a matrix $\BB C$ with dimensions $n^s\times n^t$.
These matrices are referred to as {\it unfoldings} of $\C A$, $\C B$, and $\C C$, respectively.
An unfolding of a tensor is a lower order tensor whose modes combine subsequences of modes in the higher-order tensor, while retaining its overall size and set of elements~\cite{doi:10.1137/07070111X}.
Specifically, an element $\bar a_{\bar{j}\bar{k}}$ of $\BB A$ is equal to $\Ce A_{\tpl js\tpl kv}$, where the row index $\bar{j}$ maps to a unique $\tpl{j}{s}\in\inti{1}{n}^{s}$ and the column index $\bar{k}$ maps to a unique $\tpl{k}{v}\in\inti{1}{n}^{v}$.
$\C B$ and $\C C$ are defined similarly using the same mappings of row/column indices to tensor index tuples.
}
%

When $s,t,v>0$, we could alternatively employ a different matrix multiplication algorithm to compute \eqref{eq:defctr} (e.g., Strassen's algorithm~\cite{Strassen_1969}).
In this paper, we will not consider such fast matrix multiplication algorithms, focusing instead on algorithms that exploit symmetry.
\subsection{Direct Evaluation Algorithm for Symmetric Contractions}

The nonsymmetric algorithm may be used to compute symmetric contractions with the additional
step of symmetrization of the result of the multiplication between $\C A$ and $\C B$. However,
when $\C A$ and $\C B$ are symmetric, many of the scalar
multiplications (products) in \eqref{eq:symtrctr} are equivalent. The following algorithm evaluates \rev{$\C C = \syctr{\C A}{\C B}stv$} 
by computing only the unique multiplications and scaling them appropriately.

In particular, since $\C{C}$ is symmetric, it is no longer 
necessary to compute all possible orderings of the indices
$\tpl{i}{s+t}\in\inti{1}{n}^{s+t}$ in $\syctr{\C A}{\C B}stv$, but only those in increasing order.
These tuples are given by $\tpl{i}{s+t}\in\enti{1}{n}{s+t}$
and index into all unique values of $\C C$.
Further, permutations of the $\tpl kv$ \rev{tuple} result in 
equivalent scalar multiplications due to symmetry of $\C{A}$ and of $\C{B}$. 
In the following algorithm, we rewrite \eqref{eq:symtrctr}  
to sum over only the ordered sets of these indices and scale them
appropriately.
\begin{algorithmdef}[$\nsystalg stv$]\label{alg:systsy}
For any symmetric contraction $\C C=\syctr{\C A}{\C B}stv$ compute 
\begin{align}
&\C C = \systalg{\C A}{\C B}stv 
  \defeq \forall \tpl{i}{s+t}\in\enti{1}{n}{s+t}, \nonumber \\
&\Ce C_{\tpl{i}{s+t}}= s!t! 
\sum_{(\tpl js,\tpl lt)\in \prt st(\tpl{i}{s+t})} 
\bigg(\sum_{\tpl kv\in \enti 1nv} \rho(\tpl{k}{v})\Ce A_{\tpl js\tpl kv}\cdot \Ce B_{\tpl kv\tpl lt}\bigg),\label{eq:symctr}
\end{align}
where $\rho(\tpl kv)$ is given in Definition~\ref{def:symsum}
\end{algorithmdef}
\fixed{Torsten}{Goes very fast. Resolution (Edgar): added details about folding into matrices.}
The algorithm $\nsystalg stv$ is algebraically equivalent to~\eqref{eq:symtrctr} and is numerically stable~\cite{solomonikfast}.

\rev{
The inner summation of Algorithm~\ref{alg:systsy} is equivalent to multiplication of a matrix $\BB A$ with dimensions $\chchoose ns\times \chchoose nv$ 
by a matrix $\BB B$ with dimensions $\chchoose nv\times \chchoose nt$, yielding a matrix $\BB C$ with dimensions $\chchoose ns\times \chchoose nt$. 
An element $\bar a_{\bar{j}\bar{k}}$ of $\BB A$ is equal to $\Ce A_{\tpl js\tpl kv}$, where the row index $\bar{j}$ maps to a unique $\tpl{j}{s}\in\enti{1}{n}{s}$ and the column index $\bar{k}$ maps to a unique $\tpl{k}{v}\in\enti{1}{n}{v}$.
}
The matrix $\BB C$ \rev{corresponds to} a partially symmetric tensor
of order $s+t$ (symmetric in the permutation within the first $s$ indices and within the last $t$ indices) and can be further symmetrized to obtain $\C C =\syctr{\C A}{\C B}stv$,
\[\forall \tpl i{s{+}t}\in\enti{1}{n}{s+t}, \Ce C_{\tpl{i}{s+t}}= s!t!\sum_{(\tpl js,\tpl lt)\in \prt st(\tpl{i}{s+t})} \Ce{\bar{C}}_{\tpl{j}{s}\tpl{l}{t}}.\]
The scaling by $\rho(\tpl{k}{v})$ can be applied to the elements of $\BB A$ or of $\BB B$ \rev{to enable direct reduction to a matrix-matrix product. }

\subsection{Symmetry Preserving Tensor Contraction Algorithm}

The symmetry preserving tensor contraction algorithm below computes symmetric contractions with fewer multiplications and in some cases fewer total operations than
the direct evaluation algorithm~\cite{solomonikfast}.
It has a number of applications in both matrix computations and high-order coupled cluster~\cite{cizek66} tensor contractions.
The algorithm requires the computation of a few intermediate tensors, but for brevity we only give the formula for the highest order intermediate tensor (order $\omega=s+t+v$) computed by the algorithm (the rest may be computed using $O(n^{\omega-1})$ multiplications), which suffices for our lower bound derivations.

\rev{
\begin{algorithmdef}[$\nsyfsalg stv$]\label{alg:syfs}
For any symmetric contraction $\C C=\syctr{\C A}{\C B}stv$, compute 
\begin{align}
\forall \tpl{i}{\omega}\in\enti{1}{n}{\omega}, \quad\quad 
\Ce{\hat{A}}_{\tpl{i}{\omega}} &= \sum_{\tpl{j}{ s+v } \in \prj{s+v}( \tpl{i}{\omega})}\Ce A_\tpl{j}{s+v}, \label{eq:fscAacc} \\
\Ce{\hat{B}}_{\tpl{i}{\omega}} &= \sum_{\tpl{l}{ v+t } \in \prj{v+t}( \tpl{i}{\omega})} \Ce B_\tpl{l}{v+t}, \label{eq:fscBacc} \\
\Ce{\hat{Z}}_{\tpl{i}{\omega}} &= \Ce{\hat{A}}_{\tpl i\omega} \cdot \Ce{\hat{B}}_{\tpl{i}{\omega}}, \label{eq:fscmul} \\
\forall \tpl{h}{s+t}\in\enti{1}{n}{s+t}, \quad\quad 
 \Ce Z_\tpl{h}{s+t} &= \sum_{\tpl{i}{\omega}\in\enti{1}{n}{\omega}, \tpl{h}{ s+t } \in \prj{s+t}( \tpl{i}{\omega})}\Ce{\hat{Z}}_{\tpl{i}{\omega}}.  \label{eq:fscZacc}
\end{align}
Then compute $\C C =\syfsalg{\C A}{\C B}stv = s!t!(\C Z-\C V - \C W)$, where formulas for computing $\C V$ and $\C W$ are given in ~\cite{solomonikfast}.
\end{algorithmdef}
}

The $\C Z$ tensor contains all terms needed by $\B C=\syctr{\C A}{\C B}stv$ as well as some extra terms which are independently computed as tensors $\C V$ and $\C W$ then subtracted out from $\C Z$.
The computation of $\C V$ and $\C W$ can always be done via a low order number of multiplications, but sometimes requires a constant factor more additions than those needed to compute $\C Z$.


The algorithm is numerically stable and can be extended to the antisymmetric/Hermitian cases~\cite{SD_ETHZ_2015}.
The algorithm generalizes tensor formula reorganizations previously used to lower cost of methods for electronic structure calculations~\cite{doi:10.1080/00268970500131140}.
The symmetry preserving algorithm 
\rev{does not reduce to} matrix multiplication unlike $\nnsalg stv$ and $\nsystalg stv$, which makes its communication cost analysis nontrivial.


\subsection{Bilinear Algorithms}

Bilinear algorithms \cite{pan1984can} provide a unified algebraic representation for all algorithms above.
\begin{algorithmdef}[$\bilalg F$]\label{alg:bil}
Given input vectors $\B a$, $\B b$, a bilinear algorithm, defined by matrices $\mathcal{F} = (\bFA, \bFB, \bFC)$ (with values in $\mathbb{R}$), computes output vector $\B c$ via
\[\B{{c}}=\bilalg F(\B a, \B b)\defeq \bFC[({\bFA}{}^\mathsf{T}\B{{a}})\odot({\bFB}{}^\mathsf{T}\B{{b}})],\]
where $\odot$ is the Hadamard (pointwise) product. 
\end{algorithmdef}
We say that a bilinear algorithm is {\it irreducible} if each of the matrices in $\mathcal{F}$ has full row rank.
For typical bilinear algorithms the three matrices in $\mathcal{F}$ are very sparse.
For instance, the classical matrix multiplication algorithm, has only one (unit) entry in each column of the three matrices.
The number of rows in $\bFA$, $\bFB$, $\bFC$ is equal to the dimension of $\B a$, $\B b$, $\B c$, respectively.
We refer to this 3-tuple of dimensions of $\B a$, $\B b$, $\B c$ as $\dim(\bilalg F) = (\dim(\B a), \dim(\B b), \dim(\B c))$.
There is an equal number of columns in the three matrices $\bFA$, $\bFB$, $\bFC$, the number of which is referred to as the {\it rank} of the bilinear algorithm, which we denote by $\rank(\bilalg F)$.
The rank of the bilinear algorithm defines the number of scalar multiplications in the Hadamard product, which corresponds to the number of products the algorithm computes.
Each column of the sparse matrices in $\bFA$ and $\bFB$ defines the linear combination of inputs (elements of $\B a$ or of $\B b$) that contribute to a particular product.
Each row of $\bFC$ defines how each output element of $\B c$ is computed as a linear combination of products.

We now define a representation for {\it bilinear tensor algorithms}.
Each bilinear tensor algorithm corresponds to a bilinear algorithm in the above representation.
The utility of the tensor representation is to provide notation to succinctly express algorithms for tensor contractions as bilinear algorithms.
\begin{algorithmdef}[$\nbilalg stvDK$]\label{alg:nbil}
Given tensors $\C A$, $\C B$, the bilinear tensor algorithm computes tensor $\C C$.
\rev{The three tensors and the set of bilinear products are associated with domains defined by} $\mathcal{D}\defeq (D^{(A)},D^{(B)},D^{(C)},D^{(M)})$ where $D^{(A)}\subseteq \inti{1}{n}^{s+v}$, $D^{(B)}\subseteq \inti{1}{n}^{v+t}$, $D^{(C)}\subseteq \inti{1}{n}^{s+t}$, and $D^{(M)}\subseteq {\inti 1n}^\omega$ (as before $\omega=s+t+v$).
The algorithm computes the products specified by $\mathcal{K}\defeq (\C{K}^{(A)}, \C{K}^{(B)}, \C{K}^{(C)})$ where $\C{K}^{(A)}$, $\C{K}^{(B)}$, and $\C{K}^{(C)} $ are (sparse) tensors (with values in $\mathbb{R}$) of orders
$\inti{1}{n}^{s+v+\omega}$, 
$\inti{1}{n}^{v+t+\omega}$, and 
$\inti{1}{n}^{s+t+\omega}$, respectively. 
The bilinear algorithm computes the tensor $\C{C}$ by evaluating the following set of products (corresponding to intermediate tensor $\C T$),
\[ \forall \tpl i{\omega} \in D^{(M)}, \quad \Ce T_{\tpl i\omega} = \Bigg(\sum_{\tpl{j}{s+v}\in D^{(A)}} \Ce K^{(A)}_{\tpl{j}{s+v}\tpl{i}{\omega}}\Ce A_\tpl{j}{s+v}\Bigg)\cdot\Bigg(\sum_{\tpl{l}{v+t}\in D^{(B)}} \Ce K^{(B)}_{\tpl{l}{v+t}\tpl{i}{\omega}}\Ce B_\tpl{l}{v+t}\Bigg),\]
then it accumulates these products to compute $\C C$,
\[ \forall \tpl h{s+t}  \in D^{(C)}, \quad \Ce C_{\tpl{h}{s+t}} =  \sum_{\tpl{i}{\omega}\in D^{(M)}} \Ce K^{(C)}_{\tpl{h}{s+t}\tpl{i}{\omega}} \Ce T_{\tpl{i}{\omega}}.\]
\end{algorithmdef}
To obtain a bilinear algorithm $\bilalg F$ from $\nbilalg stvDK$ it suffices to enumerate the domains in $\mathcal{D}$, mapping the corresponding four index sets \rev{(four tuples)} to four indices.
\rev{We refer to the process of mapping the data of a tensor $\C T$ from index domain $D^{(T)}$ to a vector of size $|D^{(T)}|$ as a {\it restricted unfolding} of $\C T$.
The resulting bilinear algorithms would operate on vectors $\B a$ and $\B b$ which are unfoldings of the 
the input tensors $\C A$ and $\C B$ restricted to domains $D^{(A)}$ and $D^{(B)}$, respectively.
The matrices $\bFA$, $\bFB$, and $\bFC$ in the bilinear algorithm correspond to restricted unfoldings of the tensors $\C{K}^{(A)}$, $\C{K}^{(B)}$, and $\C{K}^{(C)} $  used in the bilinear tensor algorithm.}
\begin{definition}
For any bilinear tensor algorithm $\nbilalg stvDK$, we denote its unique canonical bilinear algorithm as
\[\bilalg F = \getmat{\nbilalg stvDK},\]
\rev{where the restricted unfolding of $\mathcal{K}$ into $\mathcal{F}$ is defined by a lexicographical ordering of the domains in $\mathcal{D}$.}
\end{definition}
The canonical bilinear algorithm will satisfy $\dim(\getmat{\nbilalg stvDK}) = (|D^{(A)}|, |D^{(B)}|, |D^{(C)}|)$ and $\rank(\getmat{\nbilalg stvDK}) = |D^{(M)}|$.


The tensor domains of unique values are tailored for the symmetric tensor contraction algorithms (e.g., these domains are increasing integer sets for symmetric tensors).
In Table~\ref{tab:bil_cst} and Table~\ref{tab:bil}, we give the particular values of $\mathcal D$ and $\mathcal K$, respectively, that yield the three tensor contraction algorithms considered in this paper.
The bilinear rank of the direct evaluation algorithm is $\rank(\nsystalg stv)=\chchoose{n}{s}\chchoose{n}{t}\chchoose{n}{v}$, while the bilinear rank of the symmetry preserving algorithm is $\rank(\nsyfsalg stv)=\chchoose{n}{\omega}$ (excluding consideration of $\C V$ and $\C W$, which can be computed with bilinear rank $O(n^{\omega-1})$).

\begin{table*}[h]
\centering
\caption{
Bilinear tensor algorithm domain and ranks for the three tensor contraction algorithms.}
\small
\begin{tabular}{l|c|c|c|c|c}
$\nbilalg stvDK$ & $D^{(A)}$ & $D^{(B)}$ & $D^{(C)}$ & $D^{(M)}$ & $\rank(\nbilalg stvDK)$  \\[2ex]
\hline 
$\nnsalg stv$  & ${\inti 1n}^{s+v}$ & ${\inti 1n}^{v+t}$ & ${\inti 1n}^{s+t}$ &  ${\inti 1n}^\omega$ & $n^\omega$ \\[2ex]
\hline
$\nsystalg stv$ & $\enti 1n{s+v}$ & $\enti 1n{v+t}$ & $\enti 1n{s+t}$ & $\enti 1ns \enti 1nt\enti 1nv$ & $\chchoose ns\chchoose nt\chchoose nv$  \\[2ex]
\hline
$\nsyfsalg stv$ & $\enti 1n{s+v}$ & $\enti 1n{v+t}$ & $\enti 1n{s+t}$ & $\enti 1n\omega$ & $\chchoose n\omega$ 
\end{tabular}
\label{tab:bil_cst}
\end{table*}

\begin{table*}[h]
\centering
\caption{
Representations of three tensor contraction algorithms as bilinear tensor algorithms, $\nbilalg stvDK$. 
\rev{Only nonzeros of $\C{K}^{(A)}$, $\C{K}^{(B)}$, $\C{K}^{(C)}$ are specified.}
To interpret $\bprt{}{}{}$ and $\rho$\rev{,} see Definition~\ref{def:part} and Definition~\ref{def:symsum} respectively. The new scaling factors given here for $\nsyfsalg stv$ arise due to multiple accumulations of tensor elements with repeating indices done within Algorithm~\ref{alg:syfs}.}
\scriptsize
\begin{tabular}{l|c|c|c}
$\nbilalg stvDK$  & $\C{K}^{(A)}$ & $\C{K}^{(B)}$ & $\C{K}^{(C)}$ \\[2ex]
\hline 
$\nnsalg stv$ 
& \multicolumn{3}{c}{
\rev{$\forall \tpl js {\in} {\inti 1n}^s, \tpl lt {\in} {\inti 1n}^t, \tpl kv {\in} {\inti 1n}^v$,}}  \\[2ex]
& $k^{(A)}_{\tpl{j}{s}\tpl{k}{v}\tpl{j}{s}\tpl{l}{t}\tpl{k}{v}}{=}1$ &
  $k^{(B)}_{\tpl{k}{v}\tpl{l}{t}\tpl{j}{s}\tpl{l}{t}\tpl{k}{v}}{=}1$ &
  $k^{(C)}_{\tpl{j}{s}\tpl{l}{t}\tpl{j}{s}\tpl{l}{t}\tpl{k}{v}}{=}1$ \\[2ex]
\hline
$\nsystalg stv$ 
& $\forall \tpl lt {\in} \enti 1nt, \tpl g{s+v}{\in}\enti 1n{s+v},$ 
& $\forall \tpl js {\in} \enti 1ns, \tpl h{v+t}{\in}\enti 1n{v+t},$
& $\forall \tpl kv {\in} \enti 1nv, \tpl i{s+t}{\in}\enti 1n{s+t},$
\\[2ex]
& $(\tpl js, \tpl kv){\in}\chi^s_v(\tpl g{s+v}),$ 
  $\Ce K^{(A)}_{\tpl{g}{s+v}\tpl{j}{s}\tpl{l}{t}\tpl{k}{v}}{=}1$ 
& $(\tpl kv, \tpl lt){\in}\chi^v_t(\tpl h{v+t}),$
  $\Ce K^{(B)}_{\tpl{h}{v+t}\tpl{j}{s}\tpl{l}{t}\tpl{k}{v}}{=}1$ 
& $(\tpl js, \tpl lt){\in}\chi^s_t(\tpl i{s+t}),$
  $\Ce K^{(C)}_{\tpl{i}{s+t}\tpl{j}{s}\tpl{l}{t}\tpl{k}{v}}{=}s!t!\rho(\tpl kv)$ \\[2ex]
\hline 
$\nsyfsalg stv$ & 
\multicolumn{3}{c}{$\forall \tpl i\omega {\in} \enti 1n\omega,$} \\[2ex]
& $(\tpl j{s+v},\tpl{a}{t}){\in} \bprt{s+v}t(\tpl{i}{\omega}),$
  $\Ce K^{(A)}_{\tpl{j}{s+v}\tpl{i}{\omega}}{=}\frac{t!}{\rho(\tpl at)}$
& $(\tpl l{v+t},\tpl{b}{s}){\in} \bprt{v+t}s(\tpl{i}{\omega}),$
  $\Ce K^{(B)}_{\tpl{l}{v+t}\tpl{i}{\omega}}{=}\frac{s!}{\rho(\tpl bs)}$ 
& $(\tpl h{s+t},\tpl{c}{v}){\in} \bprt{s+t}v(\tpl{i}{\omega}),$ 
  $\Ce K^{(C)}_{\tpl{h}{s+t}\tpl{i}{\omega}}{=}s!t!\rho(\tpl cv)$
\end{tabular}
\label{tab:bil}
\end{table*}

We will study the computation of subsets of products \rev{(bilinear forms) of a} bilinear algorithms.
To achieve this, we define a 
\rev{{\it restriction} of a bilinear algorithm by taking subsets of columns in the matrix encoding.}
\begin{definition}
We say a bilinear algorithm $\bilalg G$ is a restriction of bilinear algorithm $\bilalg F$ (denoted $\bilalg G \subseteq \bilalg F$) if there is a projection matrix $\B P\in \{0,1\}^{\rank(\bilalg F)\times\rank(\bilalg G)}$ (with a single unit entry per column and at most one entry per row), such that
\[\B{G}^{(A)}= \B{F}^{(A)}\B P, \quad \quad
\B{G}^{(B)}= \B{F}^{(B)}\B P, \quad \quad
\B{G}^{(C)}= \B{F}^{(C)}\B P.\]
For tensor bilinear algorithms, $\nbilalg stvHR \subseteq \nbilalg stvDF$, when 
$\getmat{\nbilalg stvHR} \subseteq \getmat{\nbilalg stvDF}$.
\end{definition}

\section{Model of Execution and Costs}
\label{sec:model}
We present lower bounds for communication between main memory and cache on a sequential computer as well as communication between computers in the parallel setting.
A communication cost is associated with a schedule of an execution DAG of an algorithm.
In this section, we formally define execution DAGs for bilinear algorithms as well as how these executions are scheduled on sequential and parallel machine models.
The cache size and communication costs are implicitly parameterized by the tensor element size (all elements are assumed to be in the same ring $R$ and represented using a constant number of bits).


\subsection{Execution Model for Bilinear Algorithms}
\label{sec:exe_model}

Our definition of a bilinear algorithm allows for freedom in the implementation of the algorithm.
In particular, it specifies only which linear combinations need to be computed, and not through what intermediates they are computed or in which order.
We represent a choice of intermediates via a DAG.
\begin{definition}
\label{def:exeDAG}
An {\bf execution DAG} of a bilinear algorithm $\bilalg F$, is a directed acyclic graph $G=(V,E)$.
Each vertex in $V$ has either in-degree $0$, in which case it corresponds to an input value, or in-degree $2$, in which case it corresponds to the result of a (weighted) addition or a multiplication.
Further, $V$ may be partitioned disjointly as 
\[V=V^{(A)}\cup V^{(B)} \cup V^{(C)}.\]
The vertices in $V^{(A)}$ and $V^{(B)}$ correspond to linear combinations of $\C A$ and $\C B$ respectively, while the vertices in $V^{(C)}$ correspond to all computed \rev{products and linear combinations of products}.
Vertices corresponding to inputs are denoted by $V^{(I)}\subseteq V^{(A)}\cup V^{(B)}$.
Computed products (results of multiplication) and outputs are contained in $V^{(C)}$, we denote the computed products as $V^{(M)}\subseteq V^{(C)}$.
We note that \(\rank(\bilalg F) = |V^{(M)}|.\)

The set of edges $E$ may be partitioned disjointly as
\[E=E^{(A)} \cup E^{(B)} \cup E^{(AC)} \cup E^{(BC)} \cup E^{(C)}.\]
The parts $E^{(A)} \subset V^{(A)}\times V^{(A)}$ and
          $E^{(B)} \subset V^{(B)}\times V^{(B)}$ 
  must contain binary trees which compute each linear combination of inputs to the products computed by $\bilalg F$.
The parts $E^{(AC)} \subseteq V^{(A)}\times V^{(M)}$ and
          $E^{(BC)} \subseteq V^{(B)}\times V^{(M)}$ 
  define the inputs to each multiplication (so, if $\exists (v,w)\in E^{(AC)}, w\in V^{(M)}$ then $\exists (z,w)\in E^{(BC)}$ for some $z$).
Finally, the part $E^{(C)} \subset V^{(C)}\times (V^{(C)}\setminus V^{(M)})$ encodes the summations that compute the output elements (subset of $V^{(C)}$).
\end{definition}

To quantify the communication requirements of an execution DAG, it helps to reason about its expansion properties.
Graph expansion is a known technique for derivation of communication lower bounds, which has previously been used for Strassen's algorithm~\cite{Ballard:SPAA2011,scott2015complexity,bilardi2016complexity}.
We employ a different but related notion of expansion, well-suited for bilinear algorithms.
We seek to bound the size of the boundary of a subset of vertices of the execution DAG.
\begin{definition}
For execution DAG $G$ defined and partitioned as in Definition~\ref{def:exeDAG}, we say that a subset of vertices that correspond to computed values, $Z\subset V\setminus V^{(I)}$ has
\begin{itemize}
\item $A$-expansion $\zeta^{(A)}_G(Z)$ if the number of external vertices in $V^{(A)}$ connected to $Z$ is  
\[\zeta^{(A)}_G(Z)\defeq |\{v : (v,w)\in E, v\in V^{(A)}\setminus Z, w \in Z \}|,\]
\item $B$-expansion $\zeta^{(B)}_G(Z)$ if the number of external vertices in $V^{(B)}$ connected to $Z$ is
\[\zeta^{(B)}_G(Z)\defeq |\{v : (v,w)\in E, v\in V^{(B)}\setminus Z, w \in Z \}|,\]
\item $C$-expansion $\zeta^{(C)}_G(Z)$ if the number of vertices in $Z$ that are in $V^{(C)}$ or connected to a vertex in $V^{(C)}\setminus Z$ (these subsets may overlap) is  
\[\zeta^{(C)}_G(Z)\defeq |(Z\cap V^{(C)})\cup \{w : (w,v)\in E, v\in V^{(C)}\setminus Z, w \in Z\}|.\]
\end{itemize}
\end{definition}
$A$-expansion and $B$-expansion count vertices that are \rev{part of} the outer boundary of $Z$, while $C$-expansion counts vertices that are part of the inner boundary of $Z$.
If an algorithm computes all the elements in $Z$, it must obtain inputs counted by $A$-expansion and $B$-expansion as well as yield outputs that are counted by $C$-expansion.
We will use these notions of expansion to bound the number of inputs and outputs, and consequently the communication costs, associated with computing any \rev{subgraph} of an execution DAG.

\subsection{Sequential Schedule Cost Model}
\label{sec:seq_model}

A sequential schedule imposes a total ordering on a (partially-ordered) execution DAG, interleaving the operations described by vertices in the execution DAG with reads from memory to cache, writes to cache from memory, and discards from cache.
To measure the vertical communication cost on a sequential computer, we consider a cache of size $\cs$ elements and assume all data starts in main memory.
We employ an idealized cache model, i.e., we do not consider track/cache-line size or mechanisms such as cache associativity.
We assign reads and writes of data between main memory and cache unit communication cost.
We do not pay attention to synchronization/latency cost.

We refer to this sequential machine model as $\seqarch{\cs}$.
We do not restrict the size of the main memory of $\seqarch{\cs}$.
We do not allow schedules executing on $\seqarch{\cs}$ to recompute any element computed by the algorithm. 
We assume none of the inputs of the algorithm reside in cache at the start of execution and that all of the outputs must be written to memory.
We denote the sequential communication cost of a schedule executed on $\seqarch{\cs}$ as $\bwcost$ and provide lower bounds for the cost on $\bwcost$ for a given algorithm
by considering all valid sequential schedules of this algorithm.

\subsection{Parallel Schedule Cost Model}
\label{subsec:par_model}


\sloppy{We also consider \rev{parallel schedules} on a distributed-memory computer with a fully connected network.}
We denote this homogeneous parallel computer of $p$ processors as $\pararch{p}{M}$.
We assume that each element of the input to the algorithm exists on a unique processor at the start of the execution of any parallel schedule and that the parallel schedule does not
compute any element twice (no recomputation).
We allow all processors to communicate with each other (fully connected network) on $\pararch{p}{M}$ via point-to-point messages. 
We measure the parallel horizontal communication cost $W$ of a schedule on $\pararch{p}{M}$ as the largest number of elements sent and received by any processor throughout the execution of the parallel schedule.
Lower bounds on this simple communication metric yield lower bounds 
for LogP~\cite{Culler:1993}, LogGP~\cite{Alexandrov:1995}, BSP~\cite{valiant1990bridging}, and the $\alpha$--$\beta$ critical path cost 
model (described in detail in~\cite{SCKD_TECHREP_2013}).
In all of these models, the communication cost of a parallel schedule is at least the communication cost incurred by any given processor.

We assume throughout our analysis that the number of processors $p$ divides into the input/output domain sizes and the total number of operations.
In our horizontal communication lower bounds, we treat $n$ (tensor dimension) and $p$ as asymptotic parameters, while assuming the tensor orders given by $s,t,v$ are constants.
\begin{definition}
We say a parallel schedule is {\bf storage-balanced} if
\begin{itemize} 
\item at the start of execution, each processor owns $x/p$ of the elements of each operand of size $x$;
\item at the end of execution, each processor owns $y/p$ of the elements of the output of size $y$.
\end{itemize}
\end{definition}
Storage-balanced schedules can have any initial distribution of inputs and final distribution of outputs, so long as it is minimal in memory usage at the start and end of execution, and each element is stored on a unique processor.

\section{Lower Bounds for Bilinear Tensor Algorithms}
\label{sec:bi_lb}

In order to derive non-trivial lower bounds on the communication costs of a bilinear algorithm, we need to know something about the sparsity structure of the tensors in $\mathcal{F}$.
In particular, we desire a lower bound on the number of inputs and outputs required by any subset of the products that the bilinear algorithm computes.
Further, we want to know the minimal number of linear combinations that can represent the inputs and the outputs, motivating the use of matrix rank for our analysis.
\begin{definition}\label{def:biexp}
A bilinear algorithm $\bilalg F$ has nondecreasing (in all variables) \rev{an expansion bound function} $\mathcal{E}_\Lambda : \mathbb{N}^3\to \mathbb{N}$, if for all $\bilalg R \subseteq \bilalg F$,
\[\rank(\bilalg R ) \leq \mathcal{E}_\Lambda\lt(\rank(\B{R}^{(A)}),\rank(\B{R}^{(B)}),\rank(\B{R}^{(C)})\rt).\]
\end{definition}

The  expansion bound can be used to obtain lower bounds on the amount of communication needed to obtain the necessary inputs and produce the necessary outputs of an arbitrary set of computed products.
The ranks of the matrices $\B{R}^{(A)}$, $\B{R}^{(B)}$, and $\B{R}^{(C)}$ may be used to obtain lower bounds on the amount of input/output tensor data that is associated with computing the products.

\begin{lemma}\label{lem:exp_bi}
For execution DAG $G=(V,E)$ (partitioned as in Definition~\ref{def:exeDAG}) of bilinear algorithm $\bilalg F$, consider any computed subset $Z\subseteq V\setminus V^{(I)}$. 
If $\bilalg F$ has expansion bound $\mathcal{E}_\Lambda$, 
\[|Z\cap V^{(M)}| \leq \mathcal{E}_\Lambda(\zeta^{(A)}_G(Z),\zeta^{(B)}_G(Z),\zeta^{(C)}_G(Z)).\]
%
\end{lemma}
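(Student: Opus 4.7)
The plan is to extract a sub-bilinear algorithm $\bilalg R \subseteq \bilalg F$ corresponding to the multiplications performed inside $Z$, bound each of its three matrix ranks by the respective expansion count, and then invoke Definition~\ref{def:biexp} together with monotonicity of $\mathcal E_\Lambda$. Concretely, take $\bilalg R$ to be obtained by selecting the columns of $\C{F^{(A)}}$, $\C{F^{(B)}}$, $\C{F^{(C)}}$ indexed by $Z \cap V^{(M)}$; equivalently, the projection $\C P$ in the subset definition is the indicator of $Z \cap V^{(M)}$. Then $\rank(\bilalg R) = |Z \cap V^{(M)}|$, and the lemma reduces to establishing
\[\rank(\C{R^{(A)}}) \leq \zeta^{(A)}_G(Z), \quad \rank(\C{R^{(B)}}) \leq \zeta^{(B)}_G(Z), \quad \rank(\C{R^{(C)}}) \leq \zeta^{(C)}_G(Z).\]

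For the $A$-bound, each column of $\C{R^{(A)}}$ is the vector $\text{val}(u_m) \in \mathbb{R}^{\dim(\C a)}$ of coefficients representing the linear combination of $\C a$ computed at the unique vertex $u_m \in V^{(A)}$ with $(u_m, m) \in E^{(AC)}$, as $m$ ranges over $Z \cap V^{(M)}$. Let $B^{(A)} = \{v \in V^{(A)} \setminus Z : \exists w \in Z,\,(v,w)\in E\}$, so $|B^{(A)}| = \zeta^{(A)}_G(Z)$. By induction along DAG order I would show that $\text{val}(v) \in \text{span}\{\text{val}(w) : w \in B^{(A)}\}$ for every $v \in V^{(A)} \cap Z$: such $v$ cannot be an input (since $Z \cap V^{(I)} = \emptyset$), so by Definition~\ref{def:exeDAG} its two in-neighbors lie in $V^{(A)}$ via $E^{(A)}$, and each either lies in $Z$ (apply induction) or lies in $V^{(A)} \setminus Z$ with an edge into $Z$, placing it in $B^{(A)}$. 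Applying the same dichotomy to $u_m$ itself yields that every column of $\C{R^{(A)}}$ lies in a subspace of dimension at most $|B^{(A)}|$. The bound for $\C{R^{(B)}}$ is symmetric.

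The $C$-bound is dual and propagates backward through the DAG. For each $v \in V^{(C)}$ define $\beta^v \in \mathbb{R}^{|Z \cap V^{(M)}|}$ as the vector of coefficients with which products in $Z \cap V^{(M)}$ accumulate into $\text{val}(v)$; by construction the rows of $\C{R^{(C)}}$ are exactly the $\beta^o$ for output vertices $o$. With $B^{(C)}$ as in the definition, so $|B^{(C)}| = \zeta^{(C)}_G(Z)$, I would prove by induction that $\beta^v \in \text{span}\{\beta^w : w \in B^{(C)}\}$ for every $v \in V^{(C)} \setminus Z$: if $v \in V^{(M)} \setminus Z$ then $\beta^v = 0$; if $v$ is an addition vertex (in $V^{(C)} \setminus V^{(M)}$) then its $E^{(C)}$-predecessors lie in $V^{(C)}$, and any such predecessor inside $Z$ is automatically in $Z \cap V^{(C)} \subseteq B^{(C)}$ while any predecessor outside $Z$ is handled recursively. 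Outputs lying in $Z$ are already in $Z \cap V^{(C)} \subseteq B^{(C)}$. Combining the three rank bounds with Definition~\ref{def:biexp} and the monotonicity of $\mathcal E_\Lambda$ finishes the proof.

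The main subtlety to watch is the asymmetry between the outer-boundary definitions of $\zeta^{(A)}$ and $\zeta^{(B)}$ (counting external vertices feeding into $Z$) and the inner-boundary definition of $\zeta^{(C)}$ (counting vertices of $Z$ that either are in $V^{(C)}$ or feed into $V^{(C)} \setminus Z$). These asymmetric definitions are tailored precisely so that whenever the induction would otherwise step across the cut, the required vertex is already accounted for on the appropriate side; the only place a formal proof needs genuine care is the handling of the case where a multiplication's $V^{(A)}$- or $V^{(B)}$-input already lies inside $Z$ (so that the external boundary does not trivially capture it, and one must propagate through the internal $E^{(A)}$/$E^{(B)}$ binary trees).
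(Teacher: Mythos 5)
Your proposal is correct and takes essentially the same route as the paper: both restrict to the column-subset bilinear algorithm $\bilalg R\subseteq\bilalg F$ indexed by $Z\cap V^{(M)}$, bound $\rank(\C{R^{(A)}})$, $\rank(\C{R^{(B)}})$, $\rank(\C{R^{(C)}})$ by $\zeta^{(A)}_G(Z)$, $\zeta^{(B)}_G(Z)$, $\zeta^{(C)}_G(Z)$, and conclude from Definition~\ref{def:biexp} with the monotonicity of $\mathcal{E}_\Lambda$. Your explicit DAG inductions just spell out in detail what the paper phrases as the factorization $\B{w^{(A)}}=\B M\B L\B a$ with $\rank(\B M)\le \zeta^{(A)}_G(Z)$ (and its $B$- and $C$-analogues).
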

\begin{proof}
Consider the bilinear algorithm restriction $\bilalg R\subseteq \bilalg F$, \rev{acting on inputs $\B a$ and $\B b$,} where the matrices in $\mathcal{R}$ contain the subset of columns of the matrices in $\mathcal{F}$ corresponding to the products $H^{(M)}=Z\cap V^{(M)}$.
By Definition~\ref{def:biexp}, we must have 
\[|H^{(M)}| \leq \mathcal{E}_\Lambda\lt(\rank(\B{R}^{(A)}),\rank(\B{R}^{(B)}),\rank(\B{R}^{(C)})\rt).\]
Now, $\zeta^{(A)}_G(Z)$ counts the number of linear combinations of elements in $\B a$ on the boundary of $Z$ within $G$.
All \rev{linear combinations of elements of $\B a$ needed to compute the}
products in $H^{(M)}$ must be computed using these $c^{(A)}$ linear combinations of $\B a$ elements, $\B{v}^{(A)}$.
For any input vector $\B{{a}}$ to $\bilalg R$, 
\rev{these linear combinations are described by columns of $\B{R}^{(A)}$, i.e., we compute}
$\B{w}^{(A)}=\B{R}^{(A)}{}^T\B{{a}}$.
Further, since $\B{v}^{(A)}$ are linear combinations, there exists a matrix $\B{L}$ such that, $\B{v}^{(A)}=\B{L}\B{{a}}$ and since all products in $H^{(M)}$ are computed from $\B{v}^{(A)}$, there exists a matrix $\B{M}$ such that $\B{w}^{(A)}=\B{M}\B{v}^{(A)}=\B{M}\B{L}\B{{a}}$.
Therefore, we must have $\rank(\B{R}^{(A)})= \rank(\B{M}\B{L})$, and since $\rank(\B{M}\B{L})\leq \rank(\B{M})\leq \zeta^{(A)}_G(Z)$, we have that $\rank(\B{R}^{(A)})\leq \zeta^{(A)}_G(Z)$.

An identical argument may be made to show that $\rank(\B{R}^{(B)})\leq\zeta^{(B)}_G(Z)$.
The quantity $\zeta^{(C)}_G(Z)$ is the number of output elements computed within $Z$ and the number of bilinear forms within $Z$ that are dependencies of output elements or other bilinear forms not included in $Z$.
Since the execution DAG may not recompute bilinear forms, by the same argument as for $\zeta^{(A)}_G(Z)$, we assert that $\rank(\B{R}^{(C)})\leq \zeta^{(C)}_G(Z)$.
Otherwise the execution DAG may not be valid (the schedule cannot compute the result of the given transformation due to its rank).
These rank bounds suffice to prove the theorem, as the expansion bound must be nondecreasing.
\end{proof}


\subsection{Vertical Communication Lower Bounds for Bilinear Algorithms}
\label{sec:seq_bi_lb}

Knowing the  expansion bound of a bilinear algorithm allows us to obtain a vertical communication lower bound for it by a straight-forward counting argument.
\begin{theorem}\label{thm:seq_comm_lwb_bi}
Any schedule on $\seqarch{\cs}$ of any execution DAG, $G=(V,E)$, of \rev{irreducible} bilinear algorithm $\bilalg F$ with $\rank(\bilalg F)=r^{(\Lambda)}$, $\dim(\bilalg F)=(r^{(A)},r^{(B)},r^{(C)})$, and expansion bound $\mathcal{E}_\Lambda$ has vertical communication cost,
{\mathsmall
\[
\bwcost_\Lambda(\mathcal{E}_\Lambda,r^{(\Lambda)},r^{(A)},r^{(B)},r^{(C)},\cachesize)\geq \max\lt[\frac{2r^{(\Lambda)} \cs}{\mathcal{E}_\Lambda^\mathrm{max}(H)},  r^{(A)}+ r^{(B)}+ r^{(C)} \rt],\]} \\
with
{\mathsmall \(\displaystyle{\mathcal{E}_\Lambda^\mathrm{max}(\cs) = \max_{c^{(A)},c^{(B)},c^{(C)}\in \mathbb{N}, c^{(A)}+c^{(B)}+c^{(C)}= 3\cs}\mathcal{E}_\Lambda(c^{(A)},c^{(B)},c^{(C)}),}\)}
under the assumption that $\mathcal{E}_\Lambda^\mathrm{max}(\cs)$ is strictly increasing and convex for $\cs\geq 1$.
\end{theorem}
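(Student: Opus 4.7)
The plan is to establish the two arguments of the $\max$ independently. The bound $Q \geq r^{(A)}+r^{(B)}+r^{(C)}$ is immediate from the assumptions on $\seqarch{\cs}$: no input element of $\C a$ or $\C b$ resides in cache at the start of execution, so each of the $r^{(A)}+r^{(B)}$ inputs incurs at least one load, and each of the $r^{(C)}$ outputs must be written to main memory by the end, contributing at least one store; since recomputation is forbidden and all elements are distinct, these memory transfers are all necessary.

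For the main bound, I would apply a Hong--Kung style partition of the communication-ordered trace of the schedule into consecutive \emph{phases}, each containing exactly $\cs$ memory operations (the last phase possibly partial). Let $a_i$ and $b_i = \cs - a_i$ denote the loads and stores inside phase $i$, and let $Z_i \subseteq V\setminus V^{(I)}$ denote the set of DAG vertices computed during it. Any external $V^{(A)}$- or $V^{(B)}$-vertex feeding a computation in $Z_i$ must either be cache-resident at the phase's start (at most $\cs$ slots) or be loaded within the phase (at most $a_i$), so $\zeta^{(A)}_G(Z_i) + \zeta^{(B)}_G(Z_i) \leq \cs + a_i$. Symmetrically, each value computed in $Z_i$ whose consumer lies outside $Z_i$ (either an output of the algorithm or a dependency of a later phase) must either remain cache-resident at the phase's end (at most $\cs$) or be stored within the phase (at most $b_i$), so $\zeta^{(C)}_G(Z_i) \leq \cs + b_i$. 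Summing yields $\zeta^{(A)}_G(Z_i) + \zeta^{(B)}_G(Z_i) + \zeta^{(C)}_G(Z_i) \leq 3\cs$. Applying Lemma~\ref{lem:exp_bi} together with the monotonicity of $\mathcal{E}_\Lambda$ then gives $|Z_i\cap V^{(M)}| \leq \mathcal{E}_\Lambda^{\mathrm{max}}(\cs)$. Since the no-recomputation assumption makes $\{Z_i\cap V^{(M)}\}$ a partition of $V^{(M)}$, summing over the $k$ phases gives $r^{(\Lambda)} \leq k\,\mathcal{E}_\Lambda^{\mathrm{max}}(\cs)$, hence $k \geq r^{(\Lambda)}/\mathcal{E}_\Lambda^{\mathrm{max}}(\cs)$ and $Q \geq k\cs \geq r^{(\Lambda)}\cs/\mathcal{E}_\Lambda^{\mathrm{max}}(\cs)$.

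The delicate part is upgrading the constant to $2$, since the above partition gives only constant $1$. I would close this gap by running the argument in a dual pair, partitioning the trace once into blocks of $\cs$ \emph{loads} and once into blocks of $\cs$ \emph{stores}, charging in each variant the off-type operations within a block to the adjacent block so that the $3\cs$ sum bound on $\zeta^{(A)}_G(Z_i) + \zeta^{(B)}_G(Z_i) + \zeta^{(C)}_G(Z_i)$ is preserved. Each variant separately bounds either total loads or total stores from below by $r^{(\Lambda)}\cs/\mathcal{E}_\Lambda^{\mathrm{max}}(\cs)$; adding them produces $Q = \mathrm{loads} + \mathrm{stores} \geq 2 r^{(\Lambda)}\cs/\mathcal{E}_\Lambda^{\mathrm{max}}(\cs)$. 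The strict convexity hypothesis on $\mathcal{E}_\Lambda^{\mathrm{max}}$ is needed precisely here: it guarantees that asymmetric distributions of loads and stores within a phase cannot produce more products than the balanced per-phase bound, so the two dual arguments are both tight and can be added without loss. Taking the maximum with the input/output bound then yields the stated inequality.
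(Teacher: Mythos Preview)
Your base argument (partitioning the trace into phases of $\cs$ memory operations and bounding $|Z_i\cap V^{(M)}|\leq \mathcal{E}_\Lambda^{\max}(\cs)$ via Lemma~\ref{lem:exp_bi}) is correct and yields the constant-$1$ bound. However, the step upgrading the constant to $2$ via a ``dual pair'' of partitions has a genuine gap. When you partition by blocks of $\cs$ \emph{loads}, the $A$- and $B$-expansions of a block are indeed bounded by $2\cs$ (cache at start plus loads), but the $C$-expansion is bounded by $\cs$ (cache at end) \emph{plus the number of stores within the block}, and the latter is unbounded in a load-partition. Your ``charging off-type operations to the adjacent block'' does not fix this: pushing those stored bilinear forms to a neighboring load-block simply inflates that block's $C$-expansion by the same amount, so the $3\cs$ sum bound cannot be maintained across all blocks simultaneously. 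Consequently neither dual variant separately yields a lower bound on loads (resp.\ stores) of the form $r^{(\Lambda)}\cs/\mathcal{E}_\Lambda^{\max}(\cs)$, and the addition step is unjustified. Your explanation of why convexity enters here is also off.

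The paper obtains the factor $2$ by a different decomposition: it partitions the schedule into intervals in which exactly $\mathcal{E}_\Lambda^{\max}(\cs)$ \emph{products} are computed (plus a remainder interval). In each such interval the expansion bound forces $\zeta^{(A)}+\zeta^{(B)}+\zeta^{(C)}\geq 3\cs$. Letting $x_i$ be the number of input values resident in cache at the start of interval $i$ and $y_i$ the number of bilinear forms retained at its end, the cache constraint $x_{i+1}+y_i\leq \cs$ turns the per-interval I/O bound $w_i\geq 3\cs-x_i-y_i$ into $w_i\geq 2\cs-x_i+x_{i+1}$, which telescopes to $\sum w_i\geq 2f\cs$ over $f=\lfloor r^{(\Lambda)}/\mathcal{E}_\Lambda^{\max}(\cs)\rfloor$ intervals. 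The convexity and strict monotonicity of $\mathcal{E}_\Lambda^{\max}$ are used only to control the remainder interval (showing its I/O contributes at least $3\cs\,\bar m/\mathcal{E}_\Lambda^{\max}(\cs)$), not for the factor-$2$ telescoping itself.
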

\begin{proof}
The second term in the maximum within the lower bound, $ r^{(A)}+ r^{(B)}+ r^{(C)}$, arises since any schedule must read all inputs into cache to operate on them and must write all outputs back to memory.
Since we assumed that the bilinear algorithm is irreducible, $ r^{(A)}+ r^{(B)}$ such inputs need to be read from memory and $r^{(C)}$ outputs must be written back. 

The first term in the maximum within the lower bound $\frac{2r^{(\Lambda)}\cs}{\mathcal{E}_\Lambda^\mathrm{max}(H)}$, arises as a result of the expansion bound of the bilinear algorithm.
Any storage-balanced schedule on $\seqarch{\cs}$ that computes $\bilalg F$ may be partitioned into $f=\lfloor r^{(\Lambda)}/\mathcal{E}_\Lambda^\mathrm{\max}(H)\rfloor$ intervals such that in each interval exactly $\mathcal{E}_\Lambda^\mathrm{\max}(H)$ products are computed and a last interval in which the remaining $\bar{m}=r^{(\Lambda)} \bmod \mathcal{E}_\Lambda^\mathrm{\max}(H)$ forms are computed.
In each of the first $f$ intervals, if the set of computed vertices corresponding to products in the execution DAG is $Z\subset V^{(M)}$, 
by Lemma~\ref{lem:exp_bi}, 
\[\mathcal{E}_\Lambda(\zeta_G^{(A)}(Z),\zeta_G^{(B)}(Z),\zeta_G^{(C)}(Z))\geq |Z| = \mathcal{E}_\Lambda^\mathrm{max}(\cs).\]
Furthermore, each of these schedule intervals requires $\zeta_G^{(A)}(Z)$ linear combinations of $\B a$ and $\zeta_G^{(B)}(Z)$ linear combinations of $\B b$ to be input, and $\zeta_G^{(C)}(Z)$ linear combinations of products to be output.
By the definition of $\mathcal{E}_\Lambda^\mathrm{max}(\cs)$, we have $\zeta_G^{(A)}(Z)+\zeta_G^{(B)}(Z)+\zeta_G^{(C)}(Z)\geq 3\cs$, so the total number of inputs and outputs for each of these intervals must be at least $3\cs$. 
Similarly, the last interval requires at least $3\bar{H}$ inputs and outputs, where $\bar{H}$ is the maximum integer such that $\mathcal{E}_\Lambda^{\mathrm{max}}(\bar{H})\leq \bar{m}$.
We additionally assert that
\[\bar{H} \geq \cs(\bar{m}/\mathcal{E}_\Lambda^{\mathrm{max}}(\cs)),\]
because $\bar{m}\geq \mathcal{E}_\Lambda^{\mathrm{max}}(\bar{H})$ and $\mathcal{E}_\Lambda^{\mathrm{max}}(\cs)/\mathcal{E}_\Lambda^{\mathrm{max}}(\bar{H})  \geq \cs/\bar{H}$, the latter due to
$\mathcal{E}_\Lambda^{\mathrm{max}}$ being strictly increasing and convex. 

Now, let $x_i$ be the number of inputs present in cache prior to execution of interval $i$
(by assumption no inputs start in cache so $x_1=0$). The number of inputs 
present in cache at the end of execution of interval $i$ should be equal to the number of 
inputs available for interval $i+1$, $x_{i+1}$. Let the number of contributions to $\C C$ (bilinear forms),
which remain in cache (are not written to memory) at the end of interval $i$, be $y_i$.
The rest of the outputs produced by interval $i$ must be written to memory. 
Let $x_{f+1}$ be the number of inputs in cache prior to execution of the last interval,
by assumption all outputs must be written to memory by the end of this interval, so $y_{f+1}=0$. 
In total, the amount of reads from memory and writes to memory done during the execution of interval $i$ with $\mathcal{E}_\Lambda^\mathrm{\max}$ 
products is then at least $w_i\geq 3\cs-x_i-y_i, \forall i\in \inti{1}{f}$ and $w_{f+1}\geq 3\bar{H}-x_{f+1}$. 
Now, since the inputs and outputs which are kept in cache at the end of interval $i$ must fit in cache, we know that  
$\forall i\in\inti{1}{f}, \ x_{i+1}+y_i\leq \cachesize$.
Rearranging this and substituting $y_i$ into our bound on reads and writes, we obtain $w_i\geq 2\cs-x_i+x_{i+1}, \forall i\in \inti{1}{f}$. 
Summing over all intervals and extracting the first interval to apply $x_1=0$ yields the desired lower bound on total communication cost, 
\begin{align*}
 &\bwcost_\Lambda(\mathcal{E}_\Lambda,r^{(\Lambda)},r^{(A)},r^{(B)},r^{(C)},\cachesize)\ \geq\ \sum_{i=1}^{f+1} w_i  \\
&\ \geq\  2\cs + x_1 + \sum_{i=2}^{f}(2\cs-x_i+x_{i+1}) + 3\bar{H}-x_{f+1} 
\ =\  2f\cs +3\bar{H} \\
 &\ \geq\ 2\cs\lt\lfloor r^{(\Lambda)}/\mathcal{E}_\Lambda^\mathrm{\max}(H)\rt\rfloor + 3\cs (\bar{m}/\mathcal{E}_\Lambda^{\mathrm{max}}(\cs)) \ \geq\  \frac{2r^{(\Lambda)}\cs}{\mathcal{E}_\Lambda^\mathrm{\max}(H)}. \qed
\end{align*}
\renewcommand{\qedsymbol}{}
\end{proof}
This vertical communication lower bound implies that, aside from moving the inputs and outputs between memory and cache, any execution of the bilinear algorithm requires additional communication that is dependent on its bilinear expansion.
In particular, a bilinear expansion lower bound can be maximized to upper-bound the number of products that can be computed given any set of elements that fits in cache.
Such an upper bound consequently yields the lower bound on communication with respect to the total amount of products computed.

\subsection{Horizontal Communication Lower Bounds for Bilinear Algorithms}
\label{sec:par_bi_lb}

We can also formulate a lower bound on horizontal communication cost based on expansion bounds and the dimensions of the inputs.
\begin{theorem}\label{thm:par_comm_lwb_bi}
Any storage-balanced schedule on $\pararch{p}{M}$ of any execution DAG of \rev{irreducible} bilinear algorithm $\bilalg F$ with $\rank(\bilalg F)=r^{(\Lambda)}$, $\dim(\bilalg F)=(r^{(A)},r^{(B)},r^{(C)})$, and expansion bound $\mathcal{E}_\Lambda$ has horizontal communication cost,
\[W_\Lambda(\mathcal{E}_\Lambda,r^{(\Lambda)},r^{(A)},r^{(B)},r^{(C)},p)\geq c^{(A)}+c^{(B)}+c^{(C)}\]
for some non-negative $c^{(A)},c^{(B)},c^{(C)}\in \mathbb{N}$ such that 
\[\displaystyle{r^{(\Lambda)}/p \leq \mathcal{E}_\Lambda(c^{(A)}+ r^{(A)}/p, c^{(B)}+ r^{(B)}/p, c^{(C)}+ r^{(C)}/p)}.\]
\end{theorem}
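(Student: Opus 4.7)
The plan is to apply the sub-algorithm/expansion framework of Lemma~\ref{lem:exp_bi} to the products executed on the single most-loaded processor. First I would use the load-balance hypothesis together with pigeonhole to pick a processor $q$ that computes at least $r^{(\Lambda)}/p$ of the products in $V^{(M)}$; letting $Z$ be the vertices of the execution DAG realized on $q$ and $H^{(M)}=Z\cap V^{(M)}$, the sub-algorithm $\bilalg R\subseteq\bilalg F$ obtained by restricting $\mathcal{F}$ to the product-columns in $H^{(M)}$ then has $\rank(\bilalg R)=|H^{(M)}|\geq r^{(\Lambda)}/p$.

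Next I would decompose the horizontal communication of $q$. Let $c^{(A)},c^{(B)},c^{(C)}$ denote, respectively, the numbers of $\C a$-, $\C b$-, and $\C c$-related values that $q$ sends or receives, where each communicated word is classified by the partition $V=V^{(A)}\cup V^{(B)}\cup V^{(C)}$. The total traffic at $q$ is then at least $c^{(A)}+c^{(B)}+c^{(C)}$, which lower-bounds $W$. Storage balance ensures that $q$ begins with at most $r^{(A)}/p$ entries of $\C a$ and $r^{(B)}/p$ entries of $\C b$, and ends holding at most $r^{(C)}/p$ entries of $\C c$. The rank-factorization argument from the proof of Lemma~\ref{lem:exp_bi} then bounds $\rank(\C{R^{(A)}})\leq r^{(A)}/p+c^{(A)}$, since every $A$-operand of a product in $H^{(M)}$ is expressible as a linear combination of the at most $r^{(A)}/p+c^{(A)}$ $\C a$-related values ever made available at $q$; symmetrically $\rank(\C{R^{(B)}})\leq r^{(B)}/p+c^{(B)}$. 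A dual argument on the output side yields $\rank(\C{R^{(C)}})\leq r^{(C)}/p+c^{(C)}$, since each independent row produced by $\C{R^{(C)}}$ at $q$ must either be absorbed into one of the $\leq r^{(C)}/p$ outputs finally retained by $q$ or be exported in one of the $c^{(C)}$ outgoing $\C c$-related messages.

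Applying the expansion bound (Definition~\ref{def:biexp}) to $\bilalg R$ and invoking monotonicity of $\mathcal{E}_\Lambda$ then gives
\[r^{(\Lambda)}/p \leq \rank(\bilalg R) \leq \mathcal{E}_\Lambda\!\lt(\rank(\C{R^{(A)}}),\rank(\C{R^{(B)}}),\rank(\C{R^{(C)}})\rt) \leq \mathcal{E}_\Lambda\!\lt(c^{(A)}+r^{(A)}/p,\ c^{(B)}+r^{(B)}/p,\ c^{(C)}+r^{(C)}/p\rt),\]
which exhibits the required witnesses $c^{(A)},c^{(B)},c^{(C)}$ and certifies $W\geq c^{(A)}+c^{(B)}+c^{(C)}$.

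The step I expect to be most delicate is the $C$-side bound. Unlike the pull-only dependencies on the $A$- and $B$-sides, output-side bilinear forms can be partially accumulated across processors, with $q$ potentially receiving partial sums from elsewhere and forwarding its own partial contributions. I would need to argue carefully, using the no-recomputation assumption together with the one-column-per-bilinear-form structure of $\C{F^{(C)}}$, that every linearly independent contribution produced at $q$ is charged either to a retained output or to an outgoing word, so that the budget $r^{(C)}/p+c^{(C)}$ genuinely dominates $\rank(\C{R^{(C)}})$. If this bookkeeping required a finer decomposition of the DAG than sketched above, I would first record the weaker statement obtained by ignoring the $C$-side (i.e.\ bounding $\rank(\C{R^{(C)}})$ trivially by $|H^{(M)}|$), and then tighten to the full three-way witness.
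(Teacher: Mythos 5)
Your proposal is correct and follows essentially the same route as the paper: pigeonhole to a processor computing at least $r^{(\Lambda)}/p$ products, apply the rank/expansion machinery of Lemma~\ref{lem:exp_bi} to the vertices executed on that processor, and use storage balance to convert the resulting rank bounds into received/sent word counts $c^{(A)},c^{(B)},c^{(C)}$. The $C$-side bookkeeping you flag as delicate is resolved exactly as you sketch it — the paper's $C$-expansion counts bilinear forms computed locally that are either outputs or operands of sums formed elsewhere, and the no-recomputation assumption forces each such form beyond the $r^{(C)}/p$ retained outputs to be sent — so no weaker fallback is needed.
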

\begin{proof}
For any execution DAG $G=(V,E)$ of $\bilalg F$, at least one processor must compute \rev{at least} $r^{(\Lambda)}/p$ products.
Let $Z\subset V\setminus V^{(I)}$ be the set of all linear combinations of elements of $\B a$ and $\B b$ and bilinear forms this processor computes, so $|Z\cap V^{(M)}|\geq r^{(\Lambda)}/p$.
The $A$- and $B$- expansions of $Z$ count all linear combinations of input elements that are operands of products or additions computed by the processor, but are not themselves computed by this processor.
Such linear combinations must either be input initially or received via a message.
The $C$-expansion counts all bilinear forms computed by this processor that are outputs or operands to sums computed by other processors.
Such bilinear forms must be output or sent in an outgoing message by this processor.
\rev{By Lemma~\ref{lem:exp_bi}, we can relate the number of products computed by this processor to these expansion set sizes by $r^{(\Lambda)}/p\leq\mathcal{E}_\Lambda(\zeta^{(A)}_G(Z),\zeta^{(B)}_G(Z),\zeta^{(C)}_G(Z))$.}

Now, since we assume schedules on $\pararch{p}{M}$ must have load-balanced inputs and outputs, 
this processor owns $ r^{(A)}/p$ elements of $\B a$ and $ r^{(B)}/p$ elements of $\B b$ at the
start of execution, and outputs $ r^{(C)}/p$ elements of $\B c$.
Thus, the processor computing $Z$ must receive at least $c^{(A)}=\zeta^{(A)}_G(Z)- r^{(A)}/p$ linear combinations of elements of $\B{a}$, and at least $c^{(B)}=\zeta^{(B)}_G(Z)- r^{(B)}/p$ linear combinations of elements of $\B{b}$.
Further, it outputs $r^{(C)}/p$ elements of $\B c$, so it must send at least $c^{(C)}=\zeta^{(C)}_G(Z)- r^{(C)}/p$ bilinear forms.
Thus, for some $c^{(A)},c^{(B)},c^{(C)}\in\mathbb{N}$, we have
\mathsmall{
\[\frac{r^{(\Lambda)}}{p} \leq \mathcal{E}_\Lambda(\zeta^{(A)}_G(Z),\zeta^{(B)}_G(Z),\zeta^{(C)}_G(Z)) = \mathcal{E}_\Lambda\Big(c^{(A)}+ \frac{r^{(A)}}p, c^{(B)}+ \frac{r^{(B)}}p, c^{(C)}+ \frac{r^{(C)}}p\Big),\]}\\
with a corresponding communication cost lower bound of
\begin{align*}
W_\Lambda(\mathcal{E}_\Lambda,r^{(\Lambda)},r^{(A)},r^{(B)},r^{(C)},p)\geq c^{(A)}+c^{(B)}+c^{(C)}. \qed
\end{align*}
\renewcommand{\qedsymbol}{}
\end{proof}
\rev{This horizontal communication lower bound expresses the minimum amount of data that needs to be communicated ($c^{(A)}+c^{(B)}+c^{(C)}$), given that each processor starts with a $r^{(A)}/p+r^{(B)}/p$ inputs and ends up with $r^{(C)}/p$ outputs.}
The amount of communication depends on the rank as well as the expansion of the bilinear algorithm.

\section{Lower Bounds for Symmetric Tensor Contraction Algorithms}
\label{sec:sypr_lb}

The bilinear algorithm lower bound infrastructure introduced above allows us use expansion bounds to derive lower bounds that reproduce known results for nonsymmetric tensor contractions, as well as derive new asymptotically tighter bounds for symmetric tensor contraction algorithms.
To establish bounds on matrix rank in these bilinear algorithms, we make use of volumetric inequalities.
In particular, we use the Loomis-Whitney inequality, which is standard for communication lower bounds in matrix computations, as well as a higher-order generalization thereof (see Appendix~\ref{sec:vol_ineq}).
For matrix multiplication, nonsymmetric tensor contraction algorithms, and direct evaluation of symmetric contractions, we provide communication lower bounds that match known results (see Appendices~\ref{sec:ns_lb} and~\ref{sec:syde_lb}).
In this section, we focus on derivation of communication lower bounds that imply asymptotically more communication due to use of symmetry.
First, we derive a horizontal communication lower bound for the direct evaluation algorithm (which can be asymptotically higher than the bound derived in Appendix~\ref{sec:syde_lb}).
Then, we derive vertical and horizontal communication lower bounds for the symmetry preserving algorithm, which can also exceed their nonsymmetric counterparts.


\subsection{Horizontal Communication Lower Bound for the Direct Evaluation Algorithm}
\label{subsec:sym_de_lb_main}

For matrix-vector-like contractions, we provide an expansion bound for the direct evaluation algorithm that leverages our assumption that each input and output entry must be stored on a single processor.
We use this bound to demonstrate the communication overhead of symmetrization and symmetric storage, where symmetrically-equivalent entries are not stored redundantly.
\begin{lemma}\label{lem:strd_exp2}
An expansion bound 
on $\nsystalg stv$ when exactly one of $s,t,v$ is zero is
\mathfootnotesize{
\begin{align*}
&\mathcal{E}^{(s,t,v)}_\Psi(d^{(A)},d^{(B)},d^{(C)})=
\bigg({\omega \choose \min(s,v)}-1\bigg)d^{(A)}+
\bigg({\omega \choose \min(v,t)}-1\bigg)d^{(B)} \\&+
\bigg({\omega \choose \min(s,t)}-1\bigg)d^{(C)} 
+\min\bigg((d^{(A)})^{\omega/(s+v)}, (d^{(B)})^{\omega/(v+t)}, (d^{(C)})^{\omega/(s+t)}\bigg).
\end{align*}
}
\end{lemma}
\begin{proof}
Let $\nbilalg stvDK = \nsystalg stv$.
We prove the case when $v=0$ (the other two are similar).
Consider any $\bilalg R\subseteq \nbilalg stvDK$ and the associated subset of products, $V\subseteq D^{(M)}=\enti{1}{n}s\enti{1}{n}t$.
Let $c_1$ be the number of rows of $\B{{R}^{(C)}}$ that contain ${\omega}\choose s$ nonzeros \rev{(the maximal number possible)}.
Let $c_2$ be the number of other nonzero rows of $\B{{R}^{(C)}}$ (that contain at least one and at most ${\omega\choose s} -1$ nonzeros).
Since, when $v=0$, each column in $\B{R^{(C)}}$ has a single nonzero, $d^{(C)}\defeq \rank(\B{R^{(C)}})=c_1+c_2$, and further,
\[|V|\leq \bigg({\omega \choose s}-1\bigg)c_1 + {\omega \choose s}c_2 = \bigg({\omega \choose s}-1\bigg)d^{(C)} + c_1.\]
Let $Z\subseteq\enti 1n{\omega}$ be the set of indices of the $c_1$ rows ($\B C$ entries) with ${\omega\choose s}$ nonzeros.
Since the bilinear algorithm \rev{restriction} computes all products contributing to these $c_1$ rows, its $\bhRA$ and $\bhRB$ matrices have nonzeros in all rows which correspond to entries of $\C A$ and $\C B$ contributing to these $c_1$ entries of $\C{C}$.
Therefore, the set of rows with nonzeros in $\bhRA$ is
\[L^{(A)}= \{\tpl js : \tpl js \in \prj s(\tpl i{\omega}), \tpl i{\omega} \in Z \}.\]
Since each column of $\bhRA$ has a single nonzero, any set of unique rows is linearly independent, which implies that the rank of $\bhRA$ is at least $|L^{(A)}|$.
By Lemma~\ref{thm:tlw}, $|L^{(A)}|\geq |Z|^{s/\omega}$, so $d^{(A)}\geq |Z|^{s/\omega}=c_1^{s/\omega}$.
By a similar argument, the rank of $\bhRB$ is at least $d^{(B)}\geq c_1^{t/\omega}$.

Therefore, since $c_1\leq\min((d^{(A)})^{\omega/s}, (d^{(B)})^{\omega/t}, d^{(C)})$, given any subset bilinear algorithm where the ranks of $\bhRA$, $\bhRB$, and $\bhRC$ being $d^{(A)}$, $d^{(B)}$, and $d^{(C)}$, respectively, we can bound the rank of the subset bilinear algorithm by the formula,
{\mathfootnotesize
\begin{align*}
\mathcal{E}^{(s,t,0)}_\Psi(d^{(A)},d^{(B)},d^{(C)})=\bigg({\omega \choose s}-1\bigg)d^{(C)}+\min\bigg((d^{(A)})^{\omega/s}, (d^{(B)})^{\omega/t}, d^{(C)}\bigg).
\end{align*}
} \\
Further, there are symmetric bounds for other permutations of $s,t,v$, 
{\mathfootnotesize
\begin{align*}
\mathcal{E}^{(0,t,v)}_\Psi(d^{(A)},d^{(B)},d^{(C)})&=\bigg({\omega \choose t}-1\bigg)d^{(B)}+\min\bigg((d^{(A)})^{\omega/v}, d^{(B)},  (d^{(C)})^{\omega/t}\bigg), \\
\mathcal{E}^{(s,0,v)}_\Psi(d^{(A)},d^{(B)},d^{(C)})&=\bigg({\omega \choose s}-1\bigg)d^{(A)}+\min\bigg(d^{(A)}, (d^{(B)})^{\omega/v}, (d^{(C)})^{\omega/s}\bigg).
\end{align*}
} \\
We can generalize the second term in all of these bounds by 
\[\min\bigg((d^{(A)})^{\omega/(s+v)}, (d^{(B)})^{\omega/(v+t)}, (d^{(C)})^{\omega/(s+t)}\bigg),\]
when exactly one of $s,t,v$ is zero.
We similarly generalize the first term in each of the bounds,
yielding the bound in the lemma.
\end{proof}
This expansion bound yields a new horizontal communication lower bound for matrix-vector-like contractions.
The additional bound is stronger than the bound obtained via the reduction from matrix multiplication (Theorem~\ref{thm:comm_lowerb_syst_prl1}) when $s,t,v$ are all unequal (and one is zero).
In the sequential scenario, the communication cost of such cases is dominated by reading the inputs from memory or writing the output to memory.
In the parallel scenario, the largest tensor can be kept local to each processor, so horizontal communication is generally dominated by moving the entries of the second-largest tensor.

When $v=0$, the new communication lower bound then arises as a consequence of symmetrization needed to compute $\B C$. 
When $s=0$ or $t=0$, this bound is a consequence of the assumption that each unique elements of the largest symmetric operand is stored on a unique processor.
This assumption corresponds to a `packed' distributed layout~\cite{SMHSD_JPDC_2014} for symmetric tensors. Therefore, using an `unpacked' layout, where the tensor operands are stored as if they were nonsymmetric,
would make it possible to have asymptotically lower communication costs in cases when either $s$ or $t$ is zero and $v\neq s$ as well as $v\neq t$.
\begin{theorem}\label{thm:comm_lowerb_syst_prl2}
When exactly one of $s,t,v$ is zero, any storage-balanced schedule of $\systalg{\C A}{\C B}stv$ on $\pararch{p}{M}$ has horizontal communication cost, 
\[W_\Psi(n,s,t,v,p) = \Omega\lt((n^\omega/p)^{\max(s,t,v)/\omega}\rt).\]
\end{theorem}
\begin{proof}
When $v=0$, by Lemma~\ref{lem:strd_exp2}, we have the the bound
{\mathfootnotesize $\mathcal{E}^{(s,t,0)}_\Psi(d^{(A)},d^{(B)},d^{(C)})=\big({\omega \choose s}-1\big)d^{(C)}+\min\big[(d^{(A)})^{\omega/s}, (d^{(B)})^{\omega/t}, d^{(C)}\big]$.}
Applying Theorem~\ref{thm:par_comm_lwb_bi}, we obtain 
\[W_\Psi(n,s,t,0,p)\geq d^{(A)}+d^{(B)}+d^{(C)},\]
for some $d^{(A)},d^{(B)},d^{(C)}\in \mathbb{N}$ such that 
\begin{align*}
&\frac{\chchoose{n}{s}\chchoose{n}{t}}{p} \leq 
\bigg({\omega \choose s}-1\bigg)\bigg(d^{(C)}+\chchoose{n}{\omega}/p\bigg) \\
&+\min\bigg[\bigg(d^{(A)}+\chchoose{n}{s}/p\bigg)^{\omega/s}, \bigg(d^{(B)}+\chchoose{n}{t}/p\bigg)^{\omega/t}, d^{(C)}+\chchoose{n}{\omega}/p\bigg].
\end{align*}
Using the fact that ${\omega \choose s}\chchoose{n}{\omega}\leq \chchoose ns\chchoose nt$, we can simplify the above constraint to
{\mathfootnotesize
\begin{align*}
\frac{\chchoose{n}{\omega}}p \leq&
\bigg({\omega \choose s}{-}1\bigg)d^{(C)} 
{+}\min\bigg[\bigg(d^{(A)}{+}\frac{\chchoose{n}{s}}p\bigg)^{\omega/s}, \bigg(d^{(B)}{+}\frac{\chchoose{n}{t}}p\bigg)^{\omega/t}, d^{(C)}{+}\frac{\chchoose{n}{\omega}}p\bigg].
\end{align*}
}
If the first term in the right hand side equation is greater than the min, we have $d^{(C)} = \Omega(n^{\omega}/p)$, and the theorem holds, since $W_\Psi(n,s,t,v,p)\geq d^{(C)}$.
If the min is greatest, we have 
\[d^{(A)} = \Omega\lt(\lt(n^{\omega}/p\rt)^{s/\omega}\rt) \quad \text{and} \quad
d^{(B)} = \Omega\lt(\lt(n^{\omega}/p\rt)^{t/\omega}\rt),\]
furthermore,
\(W_\Psi(n,s,t,0,p) = \Omega\lt(\lt(n^{\omega}/p\rt)^{\max(s,t)/\omega}\rt).\)
Identical arguments can be made for the cases when only $s=0$ and only $t=0$, yielding the formula in the theorem.
\end{proof}

\subsection{Communication Lower Bounds for the Symmetry Preserving Algorithm}

We now derive communication cost lower bounds for the symmetry preserving algorithm.
This algorithm performs less computation and has a lower rank than the direct evaluation algorithm analyzed in the previous section.
We again use Lemma~\ref{thm:tlw} to prove the expansion bound for the symmetry preserving algorithm in order to lower bound the size of the projected sets (inputs and outputs of some bilinear algorithm \rev{restriction}).
However, for the symmetry preserving algorithm, some of these projected sets can have a higher dimensionality than corresponding ones in the direct evaluation algorithm.

We show that both the vertical and horizontal communication cost lower bounds are asymptotically greater for the symmetry preserving algorithm than the direct evaluation algorithm, whenever $0<\min(s,t,v)<\omega/3$ (possible only when $\omega\geq 4$).
These cases correspond to matrix-matrix-like contractions, where \rev{at least one of the two matrices is nonsquare.}
\begin{lemma}\label{lem:prsv_exp}
An expansion bound on $\nsyfsalg stv$ is
\[
{
\mathcal{E}^{(s,t,v)}_\Phi(d^{(A)}{,}d^{(B)}{,}d^{(C)})=\min\lt(\lt({\omega \choose t}d^{(A)}\rt)^{\frac{\omega}{s+v}}{,}\lt({\omega \choose s}d^{(B)}\rt)^{\frac{\omega}{v+t}}{,}\lt({\omega \choose v}d^{(C)}\rt)^{\frac{\omega}{s+t}}\rt).}\]
\end{lemma}
\begin{proof}
For $\nbilalg stvDK= \nsyfsalg stv$, we have the following non-zero structure in the sparse tensors specifying the corresponding bilinear algorithm, $\forall \tpl i\omega \in \enti 1n\omega$,
\begin{alignat*}{2}
& \forall(\tpl j{s+v},\tpl{a}{t})\in \bprt{s+v}t(\tpl{i}{\omega}), 
&& \quad K^{(A)}_{\tpl{j}{s+v}\tpl{i}{\omega}}=\frac{t!}{\rho(\tpl at)},  \\
& \forall(\tpl l{v+t},\tpl{b}{s})\in \bprt{v+t}s(\tpl{i}{\omega}), 
&& \quad K^{(B)}_{\tpl{l}{v+t}\tpl{i}{\omega}}=\frac{s!}{\rho(\tpl bs)},  \\
& \forall(\tpl h{s+t},\tpl{c}{v})\in \bprt{s+t}v(\tpl{i}{\omega}),
&& \quad K^{(C)}_{\tpl{h}{s+t}\tpl{i}{\omega}}=s!t!\rho(\tpl cv).
\end{alignat*}
Consider any $\bilalg R\subseteq \nbilalg stvDK$ and the associated set of products, $Z\subseteq D^{(M)}=\enti{1}{n}\omega$.
We prove that $d^{(A)}\defeq \rank(\bhRA) \geq |Z|^{(s+v)/\omega}/{\omega \choose t}$.
Let $L^{(A)}$ be the number of rows in $\bhRA$ that are nonzero,
\[L^{(A)}= \{\tpl j{s+v} : \tpl j{s+v} \in \prj {s+v}(\tpl i{\omega}), \tpl i{\omega} \in Z \}.\]
Each column of $\bhRA$ has at most ${\omega \choose t}$ nonzeros.
Therefore, since the columns of $\bhRA$ to which set $Z$ is mapped, contain nonzeros in $|L^{(A)}|$ different rows, there exists a subset of these columns of size $|L^{(A)}|/{\omega \choose t}$ where each column has a unique nonzero row, and consequently these columns are linearly-independent.
Such a subset can be found by successively picking columns with a nonzero row index that is not yet in the working set, and adding at most ${\omega \choose t}$ new indices to the working set with each added column.
The existence of this subset of columns implies that $\rank(\bhRA)\geq |L^{(A)}|$.

By Lemma~\ref{thm:tlw}, $|L^{(A)}|\geq |Z|^{(s+v)/\omega}$ and so $\rank(\bhRA) \geq |Z|^{(s+v)/\omega}$.
Identical arguments may be used to show the symmetric bounds $d^{(B)}\defeq \rank(\bhRB) \geq |Z|^{(v+t)/\omega}/{\omega \choose s}$ and $d^{(C)}\defeq \rank(\bhRC) \geq |Z|^{(s+t)/\omega}/{\omega \choose v}$.
Combining the three bounds yields the bound in the lemma.
\end{proof}

\subsubsection{Vertical Communication Lower Bounds for the Symmetry Preserving Algorithm}
\label{sec:seq_comm_prsv}
As done previously, we use the expansion bound to prove a lower  bound on the vertical communication cost on the symmetry preserving algorithm.
A previous proof of these communication lower bound on algorithm $\syfsalg{\C A}{\C B}stv$ appeared in \cite{ES_dissertation_2014}.
It employed the lower bound technique from \cite{christ2013communication}, which relies on the H\"older inequality~\cite{Holder1889} and its generalization~\cite{bennett2008brascamp}.
The proof we provide here is more general as it allows for reuse of partial summations in computing operands of different products.
\begin{theorem}\label{thm:seq_comm_prsv_lowerb}
Let $\kappa \defeq \max(s+v,v+t,s+t)$.
Any schedule of $\nsyfsalg stv$ on $\seqarch{\cs}$ has vertical communication cost,
{\mathsmall
\begin{align*}
\bwcost_\Phi(n,s,t,v,\cachesize) &\geq   \max\Bigg[\frac{2\chchoose{n}{\omega}\cachesize}{\lt(3{\omega \choose \kappa}\cachesize\rt)^{\omega/\kappa}} ,
 \chchoose{n}{s+v}+\chchoose{n}{v+t}+\chchoose{n}{s+t} \Bigg] \\
&= \Omega\lt(n^{\omega}/\cs^{\omega/\kappa-1} + n^{\kappa}\rt).
\end{align*}
}
\end{theorem}
\begin{proof}
Applying Theorem~\ref{thm:seq_comm_lwb_bi}, with the expansion bound $\mathcal{E}^{(s,t,v)}_\Phi$, we obtain the vertical communication lower bound,
\[\bwcost_\Phi(n,s,t,v,\cachesize)\geq \max\lt[\frac{2\chchoose{n}{\omega}\cs}{\mathcal{E}_\Phi^\mathrm{max}(\cs)}, \chchoose{n}{s+v}+\chchoose{n}{v+t}+\chchoose{n}{s+t}\rt],\]
where
\begin{align*}
\mathcal{E}_\Phi^\mathrm{max}(\cs) &\defeq \max_{c^{(A)},c^{(B)},c^{(C)}\in \mathbb{N}, c^{(A)}+c^{(B)}+c^{(C)}= 3\cs}\mathcal{E}_\Phi(c^{(A)},c^{(B)},c^{(C)}).
\end{align*}
Substituting $\mathcal{E}^{(s,t,v)}_\Phi$ from Lemma~\ref{lem:prsv_exp}, we obtain,
\begin{align*}
\mathcal{E}_\Phi^\mathrm{max}(\cs)&\leq \mathcal{E}^{(s,t,v)}_\Phi(3\cs,3\cs,3\cs) =\lt(3{\omega \choose \kappa}\cs\rt)^{\omega/\kappa},
\end{align*}
where ($\kappa \defeq \max(s+v,v+t,s+t)$).
Inserting this upper bound on $\mathcal{E}_\Phi^\mathrm{max}(\cs)$ into the cost lower bound on $\bwcost_\Phi(n,s,t,v,\cachesize)$, yields the cost lower bound shown in the theorem.
\end{proof}
We note that this vertical commutation lower bound is asymptotically the same as that of the direct evaluation algorithm (Theorem~\ref{thm:comm_lowerb_strd}) when $\kappa=2\omega/3$, or when one of $s,t,v$ is zero.
When $s,t,v>0$ and $\kappa>2\omega/3$ (so $0<\min(s,t,v)<\omega/3$), this communication cost lower bound is asymptotically greater.

\subsubsection{Horizontal Communication Lower Bounds for the Symmetry Preserving Algorithm}
\label{sec:par_comm_prsv}
We now derive a horizontal communication lower bound for parallel executions of the symmetry preserving algorithm.
Again the lower bound is obtained directly by substitution of the expansion bound into Lemma~\ref{thm:par_comm_lwb_bi} for general bilinear algorithms.
The second horizontal communication lower bound for the direct evaluation algorithm (Theorem~\ref{thm:comm_lowerb_syst_prl2}) is the same as the bound we prove below for the symmetry preserving algorithm when exactly one of $s,t,v$ is zero.
Like the vertical communication lower bound, the horizontal communication lower bound is greater for the symmetry preserving algorithm than for the direct evaluation algorithm, when $0<\min(s,t,v)<\omega/3$.
\begin{theorem}\label{thm:par_comm_prsv_lowerb}
Any storage-balanced schedule of $\syfsalg{\C A}{\C B}stv$ on $\pararch{p}{M}$ has horizontal communication cost,
\begin{enumerate}
\item \(W_\Phi(n,s,t,v,p)= \Omega\lt((n^\omega/p)^{\kappa/\omega}\rt)\) with $\kappa \defeq \max(s+v,v+t,s+t)$, when $s,t,v\geq 1$.
\item \(W_\Phi(n,s,t,v,p)= \Omega\lt((n^\omega/p)^{\max(s,t,v)/\omega}\rt),\)
when exactly one of $s,t,v$ is zero.
\end{enumerate}
\end{theorem}
\begin{proof}
By Lemma~\ref{lem:prsv_exp}, we have the expansion bound,
\[{\mathcal{E}^{(s,t,v)}_\Phi(d^{(A)},d^{(B)},d^{(C)})=\min\lt[\lt({\omega \choose t}d^{(A)}\rt)^{\frac{\omega}{s+v}},\lt({\omega \choose s}d^{(B)}\rt)^{\frac{\omega}{v+t}},\lt({\omega \choose v}d^{(C)}\rt)^{\frac{\omega}{s+t}}\rt].}\]
Applying Theorem~\ref{thm:par_comm_lwb_bi}, with expansion bound $\mathcal{E}^{(s,t,v)}_\Phi$, we obtain the horizontal communication cost bound,
\[W_\Psi(n,s,t,v,p)\geq d^{(A)}+d^{(B)}+d^{(C)},\]
with the constraint that $d^{(A)},d^{(B)},d^{(C)}\in \mathbb{N}$ and
{\mathfootnotesize
\begin{alignat*}{2}
& \frac{\chchoose{n}{\omega}}{p} \leq 
\mathcal{E}^{(s,t,v)}_\Phi&&\Bigg(d^{(A)}+\chchoose{n}{s+v}/p,d^{(B)}+\chchoose{n}{v+t}/p,d^{(C)}+\chchoose{n}{s+t}/p\Bigg) \\
&=\min\Bigg[\bigg({\omega \choose t}&&\lt(d^{(A)}+\chchoose{n}{s+v}/p\rt)\bigg)^{\frac{\omega}{s+v}},
            \lt({\omega \choose s}\lt(d^{(B)}+\chchoose{n}{v+t}/p\rt)\rt)^{\frac{\omega}{v+t}}, \\
           & &&\lt({\omega \choose v}\lt(d^{(C)}+\chchoose{n}{s+t}/p\rt)\rt)^{\frac{\omega}{s+t}}\Bigg].
\end{alignat*}
}
When $s,t,v\geq 1$, this constraint implies that we must have
\[\min\lt((d^{(A)})^{\frac{\omega}{s+v}},(d^{(B)})^{\frac{\omega}{v+t}},(d^{(C)})^{\frac{\omega}{s+t}}\rt)=\Omega\lt(\chchoose{n}{\omega}/{p}\rt).\]
Furthermore, in this case, for $\kappa = \max(s+v,v+t,s+t)$,
\[W_\Phi(n,s,t,v,p)= \Omega\lt((n^\omega/p)^{\kappa/\omega}\rt).\]
However, when $v=0$, \(\lt({\omega \choose v}\tchchoose{n}{s+t}/p\rt)^{\frac{\omega}{s+t}}=\frac{\chchoose{n}{\omega}}{p}\), so we have only the constraint,
\[\min((d^{(A)})^{\frac{\omega}{s}},(d^{(B)})^{\frac{\omega}{t}})=\Omega\lt(\chchoose{n}{\omega}/{p}\rt).\]
This constraint implies the lower bound,
\[W_\Phi(n,s,t,0,p)= \Omega\lt((n^\omega/p)^{\max(s,t)/\omega}\rt).\]
Similar arguments apply when instead of $v=0$, we have $s=0$ or $t=0$.
Combining the bounds from these three cases yields the general bound stated in the theorem.
\end{proof}


%

\section{Conclusion}
\label{sec:conc}

Table~\ref{tab:lbsum} in Section~\ref{sec:intro} summarizes our lower bounds results in the general case and for various contractions of particular order.
The matrix multiplication and nonsymmetric tensor contraction vertical communication lower bounds are easily
attainable by \rev{computing products of} $\sqrt{\cs/3} \times \sqrt{\cs/3}$ blocks of each matrix.
The parallel horizontal communication bounds for these nonsymmetric algorithms are also attainable for all dimensions~\cite{demmel2013communication}.
The vertical communication lower bounds for the direct evaluation algorithm, $\systalg ABstv$, are asymptotically attainable by
use of an efficient matrix multiplication algorithm, however, attaining the constant factors
is more difficult. When one of $s,t,v$ is zero, they may be attained by exploiting the symmetry
of the largest tensor for each block loaded in cache. 
The parallel horizontal communication lower bounds for the direct evaluation algorithm should
be asymptotically attainable when $s,t,v>0$ by unpacking the tensors and performing an efficient
matrix multiplication algorithm~\cite{SMHSD_JPDC_2014,Rajbhandari:2014:CFC:2683593.2683635}.
When exactly one of $s,t,v$ is zero, storing the largest tensor in packed layout and moving the 
other operands should asymptotically attain the lower bounds presented in this paper.

\rev{Relative to these upper bounds, our lower bound results suggest that in the distributed setting, it may not be worthwhile to store tensors
in packed layout when extra memory is available, as unpacked storage permits lower communication cost for some matrix-vector-like contractions.}
For the symmetry preserving algorithm, $\syfsalg ABstv$, the sequential vertical communication cost can be attained
within a constant factor by the algorithm given in~\cite{ES_dissertation_2014}.
We conjecture that the horizontal communication bounds may also be attained asymptotically for any choice of $(s,t,v)$.
\rev{However, our lower bounds for symmetry preserving algorithms already show that any parallel schedule for this algorithm will require more communication cost than that achievable with the direct evaluation algorithms for some contractions.}


\rev{Overall, our analysis makes steps toward characterization of the communication costs necessary for contraction methods of tensors with permutational index symmetry.}
Tight bounds on the communication cost of such contractions would provide communication cost bounds for tensor-contraction-based methods such as coupled-cluster.
Further, the infrastructure we develop for bilinear algorithms is extensible not only to partial-symmetric tensors, but also to contractions of sparse or structured tensors (e.g., Toeplitz or Henkel matrices).
\rev{Our notion of expansion bounds for bilinear algorithms provides a potential route for deriving communication lower bounds for these problems in a unified way.}

\bibliographystyle{siamplain}
\bibliography{paper}
\closeoutputstream{fixmes}

\section{Appendix I: Volumetric Inequalities}
\label{sec:vol_ineq}

To derive expansion bounds for algorithms, we will employ volumetric inequalities to lower bound the sizes of sets of projections.
In particular, we use the following generalization of the Loomis-Whitney inequality~\cite{loomis1949n,Tiskin98thedesign}. 
\begin{theorem}
\label{thm:otlw} 
Let $V$ be a set of $m$-tuples, $V \subseteq \inti{1}{n}^m$,
Consider $m\choose r$ projections:
\[\forall \tpl{s}{r}\in \lnti{1}{m}{r}, \quad \pi_{\tpl{s}{r}}(\tpl{i}{m})\defeq (i_{s_1}, \ldots, i_{s_r}),\]
and apply these projections to all elements in $V$ to form the projection sets:
\[\forall \tpl{s}{r}\in \lnti{1}{m}{r}, \quad  L_\tpl{s}{r} \defeq \{\pi_{\tpl{s}{r}}(\tpl{i}{m}) :  \tpl im \in V\}.\]
The cardinality of the set $V$ may be upper bound by the cardinalities of these projections,
\[  |V| \le \bigg(\prod_{\tpl{s}{r}\in \lnti{1}{m}{r}}|L_\tpl{s}{r}|\bigg)^{1/{m-1\choose r-1}}. \]
\end{theorem}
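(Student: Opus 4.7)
The plan is to prove this via Shearer's entropy inequality, the standard generalization of Han's inequality that extends the original Loomis--Whitney argument (the case $r=m-1$) to arbitrary $r\in\inti{1}{m-1}$. Shearer's route produces exactly the exponent $1/\binom{m-1}{r-1}$ in a single counting step, which is what makes it preferable to a direct H\"older induction here.

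First, I would equip $V$ with the uniform distribution and let $U = (U_1,\ldots,U_m)$ be the associated random tuple, so that $H(U) = \log|V|$, where $H$ denotes Shannon entropy. For each $\tpl{s}{r}\in\lnti{1}{m}{r}$, the projected tuple $\pi_{\tpl{s}{r}}(U)$ is supported on $L_{\tpl{s}{r}}$, hence $H(\pi_{\tpl{s}{r}}(U)) \leq \log|L_{\tpl{s}{r}}|$ by the maximum-entropy property of the uniform distribution on a finite support.

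Next, I would apply Shearer's lemma to the collection $\mathcal{C} = \lnti{1}{m}{r}$. A direct counting argument shows that each index $i\in\inti{1}{m}$ belongs to exactly $\binom{m-1}{r-1}$ of the $\binom{m}{r}$ members of $\mathcal{C}$, so $\mathcal{C}$ is a uniform cover of $\inti{1}{m}$ with multiplicity $\binom{m-1}{r-1}$. Shearer's inequality then yields
\[
\binom{m-1}{r-1}\,H(U)\;\leq\;\sum_{\tpl{s}{r}\in\lnti{1}{m}{r}} H(\pi_{\tpl{s}{r}}(U)) \;\leq\; \sum_{\tpl{s}{r}\in\lnti{1}{m}{r}} \log|L_{\tpl{s}{r}}|.
\]
Dividing by $\binom{m-1}{r-1}$ and exponentiating produces the bound claimed in the theorem.

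The only nontrivial ingredient is Shearer's lemma itself, which follows routinely by induction on $|\mathcal{C}|$ from the entropy chain rule together with the fact that conditioning cannot increase entropy; it may simply be cited rather than reproduced. If an entropy-free presentation is preferred, one can instead induct on $m$, slicing $V$ along a single coordinate and gluing the classical Loomis--Whitney bounds on the slices via H\"older's inequality with exponents tuned to $\binom{m-1}{r-1}$; this reaches the same conclusion at the cost of heavier combinatorial bookkeeping. In either route, the main conceptual step is recognizing that the $\binom{m}{r}$ projections form a uniform cover whose multiplicity $\binom{m-1}{r-1}$ coincides exactly with the exponent appearing in the statement; once that is isolated, the rest of the argument is mechanical.
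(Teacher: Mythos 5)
Your proof is correct. Note, however, that the paper does not actually prove Theorem~\ref{thm:otlw} at all: it is imported as a known result, citing Loomis--Whitney and Tiskin, and only its consequences (Lemma~\ref{thm:tlw}, and the $m=3$, $r=2$ case used in Lemma~\ref{lem:mm_exp}) are applied. So there is no in-paper argument to compare against; what you supply is the standard self-contained derivation of the discrete form via Shearer's entropy lemma. Your steps all check out: the uniform distribution on $V$ gives $H(U)=\log|V|$; each projected variable $\pi_{\tpl sr}(U)$ is supported on $L_\tpl{s}{r}$, so its entropy is at most $\log|L_\tpl{s}{r}|$; the family $\lnti 1mr$ of $r$-subsets covers each coordinate exactly $\binom{m-1}{r-1}$ times, which is precisely the multiplicity Shearer's lemma requires; dividing and exponentiating yields the stated exponent $1/\binom{m-1}{r-1}$. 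This entropy route is arguably cleaner than the measure-theoretic/H\"older induction behind the original Loomis--Whitney citation, and it makes transparent why the exponent is the cover multiplicity. The only cosmetic caveat is the degenerate case $|V|=0$ (or $r=m$), where the inequality is trivial and the entropy argument should simply be bypassed; you may also want to either cite Shearer's lemma explicitly or include its short chain-rule induction so the proof is fully self-contained.
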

The standard Loomis-Whitney inequality~\cite{loomis1949n} is given by Theorem~\ref{thm:otlw} with $m=d$ and $r=d-1$.
We use the $d=3$ form of it to prove Lemma~\ref{lem:mm_exp}.

When applying such inequalities to obtain communication lower bounds, we are generally interested in a lower bound on the size of the projected sets $\{L_\tpl{s}{r}\}$, rather than an upper bound on $V$.
So, we introduce the following lemma, which gives a lower bound on the union of the projections $L=\bigcup_{\tpl{s}{r}\in \lnti{1}{m}{r}}L_\tpl{s}{r}$ and can be succinctly expressed using $\prj r$ (Definition~\ref{def:prj}).
\begin{lemma}
\label{thm:tlw} 
Let $V$ be a set of $m$-tuples, $V \subseteq \inti{1}{n}^m$,
consider the projected sets given by the projection map $\prj r$,
\[ L = \{ \tpl{w}{r} : \tpl{w}{r}\in \prj r(\tpl{v}{m}), \tpl{v}{m}\in V\},\]
then we have 
$  |V|\le \lt|L\rt|^{m/r}.$
\end{lemma}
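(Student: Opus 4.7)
The plan is to reduce this to the generalized Loomis--Whitney inequality stated in Theorem~\ref{thm:otlw}. The key observation is that $\prj{r}$ takes all ordered size-$r$ subcollections of a tuple in the sense of preserving the order of its components, so $\prj{r}(\tpl{v}{m})$ enumerates precisely the tuples obtained from $\tpl{v}{m}$ by the coordinate projections $\pi_{\tpl{s}{r}}$ ranging over all $\tpl{s}{r}\in\lnti{1}{m}{r}$. Consequently, each of the $\binom{m}{r}$ projection sets $L_{\tpl{s}{r}}$ from Theorem~\ref{thm:otlw} is a subset of $L$.

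Given this, I would proceed in three steps. First, I would formally note that for every $\tpl{s}{r}\in\lnti{1}{m}{r}$, $L_{\tpl{s}{r}}\subseteq L$ and therefore $|L_{\tpl{s}{r}}|\leq |L|$. Second, I would apply Theorem~\ref{thm:otlw} to $V$ with this choice of $r$, yielding
\[|V|\leq \bigg(\prod_{\tpl{s}{r}\in \lnti{1}{m}{r}}|L_{\tpl{s}{r}}|\bigg)^{1/\binom{m-1}{r-1}}\leq |L|^{\binom{m}{r}/\binom{m-1}{r-1}}.\]
Third, I would simplify the exponent via the elementary identity
\[\frac{\binom{m}{r}}{\binom{m-1}{r-1}}=\frac{m!\,(r-1)!\,(m-r)!}{r!\,(m-r)!\,(m-1)!}=\frac{m}{r},\]
which gives the claimed bound $|V|\leq |L|^{m/r}$.

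There is no real obstacle here: the statement is essentially a reformulation of Theorem~\ref{thm:otlw} in which the $\binom{m}{r}$ individual coordinate projections are replaced by a single larger set $L$ that contains all of them. The only subtlety worth being explicit about in the write-up is the correspondence between $\prj{r}$ (defined via ordered partitions in Definition~\ref{def:prj}) and the coordinate projections $\pi_{\tpl{s}{r}}$ onto increasing index sets; this bijection is what ensures $L_{\tpl{s}{r}}\subseteq L$ holds uniformly across all $\tpl{s}{r}$. Once that identification is in place, the lemma follows from Theorem~\ref{thm:otlw} by a single combinatorial simplification.
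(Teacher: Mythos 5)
Your proof is correct and follows essentially the same route the paper intends: the paper states that the lemma ``may be easily obtained directly from Theorem~\ref{thm:otlw}'', and your argument—observing that each coordinate-projection set $L_{\tpl{s}{r}}$ is contained in $L$, applying Theorem~\ref{thm:otlw}, and simplifying the exponent $\binom{m}{r}/\binom{m-1}{r-1}=m/r$—is exactly that direct derivation. No gaps.
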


The proof of Lemma~\ref{thm:tlw} may be easily obtained directly from Theorem~\ref{thm:otlw}, see~\cite{ES_dissertation_2014}.

\section{Appendix II: Lower Bounds for Nonsymmetric Contraction Algorithms}
\label{sec:ns_lb}
%
%
%
%
%
We exercise the bilinear algorithm lower bound infrastructure by deriving communication lower bounds for nonsymmetric contractions (matrix multiplication and the general case, which follows trivially).
These results are well-known, although our lower bound constants are stronger than those presented in previous analyses.
All later communication lower bound proofs will follow the same logical structure as the ones in these section.
\subsection{Lower Bounds for Matrix Multiplication}
\label{sec:comm_mm}

We start by applying the theory developed for bilinear algorithms to matrix multiplication, reproducing known results.
As for all bilinear algorithms, we start by deriving the expansion bound of the algorithm, then applying Theorem~\ref{thm:seq_comm_lwb_bi} and Theorem~\ref{thm:par_comm_lwb_bi} to obtain horizontal and communication lower bounds, respectively.
\begin{lemma}\label{lem:mm_exp}
An expansion bound (Definition~\ref{def:biexp}) on the classical (non-Strassen-like) matrix multiplication algorithm of $m$-by-$k$ matrix $\B A$ with $k$-by-$n$ matrix $\B B$ into $m$-by-$n$ matrix $\B C$ is
\[\mathcal{E}_\text{MM}(d^{(A)},d^{(B)},d^{(C)})=(d^{(A)}d^{(B)}d^{(C)})^{1/2}.\]
\end{lemma}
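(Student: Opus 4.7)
The plan is to recognize that classical matrix multiplication, in the bilinear tensor algorithm formalism of Table~\ref{tab:bil} with $(s,t,v)=(1,1,1)$, assigns to each product a unique triple $(i,j,k)\in\inti{1}{m}\times\inti{1}{n}\times\inti{1}{k}$, with the single A-operand $A_{ik}$, single B-operand $B_{kj}$, and single C-target $C_{ij}$. So if I take any subset $\bilalg R\subseteq\bilalg F$ of products, I can identify it with a set $V\subseteq\inti{1}{m}\times\inti{1}{n}\times\inti{1}{k}$ whose cardinality is exactly $\rank(\bilalg R)$. The three projections of $V$ onto the pairs of coordinates $(i,k)$, $(k,j)$, $(i,j)$ record the distinct A-inputs, B-inputs, and C-outputs touched by the subset.

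The first step is to identify each of the three matrix ranks with the size of the corresponding projection. Because in classical matrix multiplication each column of $\C{F^{(A)}}$, $\C{F^{(B)}}$, $\C{F^{(C)}}$ has exactly one nonzero entry, the submatrix $\C{R^{(A)}}$ obtained by restricting to the columns indexed by $V$ has a nonzero row for each element of the projection $L_{13}=\{(i,k):(i,j,k)\in V\}$ and zero rows elsewhere. Its rank therefore equals $|L_{13}|$. The analogous identifications give $\rank(\C{R^{(B)}})=|L_{23}|$ and $\rank(\C{R^{(C)}})=|L_{12}|$.

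The second step is to apply the Loomis--Whitney inequality in the form provided by Theorem~\ref{thm:otlw} with $m=3$ and $r=2$. The exponent in the conclusion becomes $1/\binom{m-1}{r-1}=1/2$, yielding
\[ |V|\le \bigl(|L_{12}|\cdot |L_{13}|\cdot |L_{23}|\bigr)^{1/2}. \]
Combining with the rank identifications of the first step, I obtain
\[ \rank(\bilalg R)=|V|\le \bigl(\rank(\C{R^{(A)}})\cdot \rank(\C{R^{(B)}})\cdot \rank(\C{R^{(C)}})\bigr)^{1/2}, \]
so $\mathcal{E}_\text{MM}(d^{(A)},d^{(B)},d^{(C)})=(d^{(A)}d^{(B)}d^{(C)})^{1/2}$ is a valid expansion bound in the sense of Definition~\ref{def:biexp}, and it is manifestly nondecreasing in each argument as required.

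I do not anticipate a significant obstacle: the only subtle point is checking that submatrix rank collapses to the projection size, which is immediate from the one-nonzero-per-column sparsity pattern of the three encoding matrices of $\nnsalg 111$. If the argument ever had to be extended to a bilinear subset $\bilalg R$ whose columns arise as arbitrary linear combinations of columns of $\bilalg F$ (rather than a selection of columns), the same bound would still hold because the ranks can only decrease under taking linear combinations, but Definition~\ref{def:biexp} only demands the bound for column-selection subsets, so the cleaner identification above suffices.
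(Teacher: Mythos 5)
Your proposal is correct and follows essentially the same route as the paper's proof: identify the subset of products with a set $V$ of index triples, use the one-nonzero-per-column structure to equate the three submatrix ranks with the sizes of the coordinate-pair projections, and apply Theorem~\ref{thm:otlw} (Loomis--Whitney with exponent $1/2$) to bound $|V|$. You even state the Loomis--Whitney parameters correctly as $m=3$, $r=2$, whereas the paper's text transposes them.
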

\begin{proof}
For the matrix multiplication bilinear tensor algorithm $\nbilalg 111DK$, where $D^{(A)}=\inti 1m\inti 1k$, $D^{(B)}=\inti 1k\inti 1n$, $D^{(C)}=\inti 1m\inti 1n$, and $D^{(M)}=\inti 1m\inti 1n\inti 1k$ (these compose $\mathcal{D}$), the sparse tensors specifying the corresponding bilinear algorithm are
{\mathfootnotesize
\begin{align*}
\forall i_1\in \inti{1}{m},i_2\in\inti{1}{n},i_3\in\inti{1}{k},
K^{(A)}_{i_1i_3i_1i_2i_3}= 
K^{(B)}_{i_3i_2i_1i_2i_3}= 
K^{(C)}_{i_1i_2i_1i_2i_3}=1. 
\end{align*}
}
Consider any $\bilalg R\subseteq \getmat{\nbilalg 111DK}$ and the associated subset of products, $V\subseteq D^{(M)}$.
The columns of the matrices $\B{{R}}^{(A)}$, $\B{{R}}^{(B)}$, and $\B{{R}}^{(C)}$ (contained in $\mathcal{{R}}$) each have a single non-zero (unit) entry and so are only linearly dependent when they are equivalent.
Let the numbers of unique columns in these three matrices be $d^{(A)}=\rank(\B{R}^{(A)})$, $d^{(B)}=\rank(\B{R}^{(B)})$, and $d^{(C)}=\rank(\B{R}^{(C)})$.
The number of such unique columns is also the size of the projection sets $d^{(A)}=|L_{(1,3)}|$,
where $L_{(1,3)}=\{(i_1,i_3) : (i_1,i_2,i_3)\in V\}$ and similarly for $\B B$ and $\B C$.
Therefore, we can apply Theorem~\ref{thm:otlw} with $m=2$ and $r=3$ to bound the cardinality of $V$ as 
{\mathfootnotesize
\[\rank(\bilalg R) = |V|\leq(d^{(A)}d^{(B)}d^{(C)})^{1/2}=(\rank(\B{R}^{(A)})\rank(\B{R}^{(B)})\rank(\B{R}^{(C)}))^{1/2}.\]
}
Thus Definition~\ref{def:biexp} is satisfied for the expansion bound $\mathcal{E}_\text{MM}$.
\end{proof}

We now give a lower bound on the communication cost of matrix multiplication.
This lower bound result is not new from an asymptotic stand-point (the asymptotic lower bound was first proven by \cite{Jia-Wei:1981:ICR:800076.802486}).
The first term in the bound is a factor of 16 higher than the lower bound given earlier by \cite{greygeneral2010}, where the assumptions on initial/final data layout and overlap between input entries are looser.
\begin{theorem}\label{thm:seq_comm_lwb_mm}
Any sequential schedule of the classical (non-Strassen-like) matrix multiplication algorithm of $m$-by-$k$ matrix $\B A$ with
$k$-by-$n$ matrix $\B B$ into $m$-by-$n$ matrix $\B C$ on $\seqarch{\cs}$
has vertical communication cost,
$$\bwcost_\mathrm{MM}(m,n,k,\cachesize)\geq \max\lt[\frac{2mnk}{\sqrt{\cachesize}},mk+kn+mn\rt].$$
\end{theorem}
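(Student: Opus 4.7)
The plan is to apply Theorem~\ref{thm:seq_comm_lwb_bi} directly to the bilinear algorithm representation of classical matrix multiplication, using the expansion bound $\mathcal{E}_\text{MM}$ supplied by Lemma~\ref{lem:mm_exp}. The relevant parameters of this bilinear algorithm are $r^{(\Lambda)} = mnk$ (one product per triple $(i_1,i_2,i_3) \in \inti{1}{m} \times \inti{1}{n} \times \inti{1}{k}$) and $\dim(\bilalg F) = (mk, kn, mn)$. With these identifications, the second term in the general bound of Theorem~\ref{thm:seq_comm_lwb_bi} immediately becomes $r^{(A)}+r^{(B)}+r^{(C)} = mk + kn + mn$, which accounts for reading every input entry into cache and writing every output entry back to memory.

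The main work is to evaluate $\mathcal{E}_\text{MM}^\mathrm{max}(\cs)$. By Lemma~\ref{lem:mm_exp},
\[
\mathcal{E}_\text{MM}^\mathrm{max}(\cs) = \max_{c^{(A)}+c^{(B)}+c^{(C)} = 3\cs}\sqrt{c^{(A)} c^{(B)} c^{(C)}}.
\]
I would apply the AM-GM inequality to the product under the square root: the constraint fixes the sum of the three terms, so the product is maximized at the symmetric point $c^{(A)} = c^{(B)} = c^{(C)} = \cs$, giving $\mathcal{E}_\text{MM}^\mathrm{max}(\cs) = \cs^{3/2}$. Before invoking Theorem~\ref{thm:seq_comm_lwb_bi}, I would check the hypothesis of that theorem, namely that $\mathcal{E}_\text{MM}^\mathrm{max}(\cs) = \cs^{3/2}$ is strictly increasing and convex for $\cs \geq 1$; both properties follow from inspecting the first and second derivatives of $\cs^{3/2}$.

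Substituting into the first term of the bound in Theorem~\ref{thm:seq_comm_lwb_bi},
\[
\frac{2 r^{(\Lambda)} \cs}{\mathcal{E}_\text{MM}^\mathrm{max}(\cs)} \;=\; \frac{2 \cdot mnk \cdot \cs}{\cs^{3/2}} \;=\; \frac{2mnk}{\sqrt{\cachesize}}.
\]
Combining with the input/output term yields the claimed lower bound. There is no real obstacle in this argument beyond the AM-GM computation and the convexity check; the result is essentially a corollary of the general machinery once Lemma~\ref{lem:mm_exp} is in hand. The only place to be slightly careful is ensuring the AM-GM extremizer $\cs$ is integer-feasible (or arguing that the relaxation to real-valued $c^{(\cdot)}$ in the definition of $\mathcal{E}_\text{MM}^\mathrm{max}$ can only increase the maximum, which still yields an upper bound of $\cs^{3/2}$ and hence a valid lower bound on $\bwcost_\mathrm{MM}$).
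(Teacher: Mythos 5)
Your proposal is correct and matches the paper's own proof essentially step for step: both apply Theorem~\ref{thm:seq_comm_lwb_bi} to the bilinear-algorithm representation of classical matrix multiplication with $r^{(\Lambda)}=mnk$ and $\dim=(mk,kn,mn)$, evaluate $\mathcal{E}_\mathrm{MM}^\mathrm{max}(\cs)=\cs^{3/2}$ via the AM--GM extremizer $c^{(A)}=c^{(B)}=c^{(C)}=\cs$, and check that $\cs^{3/2}$ is strictly increasing and convex before substituting. Your extra remark on integer feasibility (relaxing to reals only increases the maximum, so the upper bound $\cs^{3/2}$ and hence the lower bound on $\bwcost_\mathrm{MM}$ remain valid) is a small but sound refinement of a point the paper leaves implicit.
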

\begin{proof}
By Lemma~\ref{lem:mm_exp}, we know that the classical matrix multiplication algorithm has expansion bound  $\mathcal{E}_\text{MM}(d^{(A)},d^{(B)},d^{(C)})=(d^{(A)}d^{(B)}d^{(C)})^{1/2}$.
Applying Theorem~\ref{thm:seq_comm_lwb_bi}, with this expansion bound, we obtain the communication lower bound,
\[\bwcost_\mathrm{MM}(m,n,k,\cachesize)\geq \max\lt[\frac{2mnk\cs}{\mathcal{E}_\mathrm{MM}^\mathrm{max}(\cs)}, mk+kn+mn\rt],\]
where
\[\mathcal{E}_\mathrm{MM}^\mathrm{max}(\cs) = \max_{c^{(A)},c^{(B)},c^{(C)}\in \mathbb{N}, c^{(A)}+c^{(B)}+c^{(C)}\leq 3\cs}  (c^{(A)}c^{(B)}c^{(C)})^{1/2} = \cs^{3/2},\]
which is strictly increasing and convex for $\cs\geq 1$ as needed, so we arrive at the bound stated in the theorem.
\end{proof}

The following lower bound, Theorem~\ref{thm:par_comm_lwb_mm} is also proven in \cite{demmel2013communication}.
We give an alternate proof using Theorem~\ref{thm:par_comm_lwb_bi}.
\begin{theorem}\label{thm:par_comm_lwb_mm}
Any storage-balanced schedule of the classical (non-Strassen-like) matrix multiplication algorithm of $m$-by-$k$ matrix $\B A$ with
$k$-by-$n$ matrix $\B B$ into $m$-by-$n$ matrix $\B C$ on $\pararch{p}{M}$ has horizontal communication cost, 
{\mathfootnotesize
\[W_\mathrm{MM}(m,n,k,p)=\Omega\lt(W_\mathrm{O}(\min(m,n,k),\mathrm{median}(m,n,k),\max(m,n,k),p)\rt),\]
}
where
\[W_\mathrm{O}(x,y,z,p)= 
\begin{cases}
\lt(\frac{xyz}p\rt)^{2/3} & : p>yz/x^2.\\ 
x\lt(\frac{yz}{p}\rt)^{1/2} & 
: yz/x^2 \geq p > z/y. \\ 
xy& 
: z/y \geq p. 
\end{cases}
\]
\end{theorem}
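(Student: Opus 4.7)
The plan is to feed the expansion bound $\mathcal{E}_\mathrm{MM}(d^{(A)},d^{(B)},d^{(C)})=(d^{(A)}d^{(B)}d^{(C)})^{1/2}$ from Lemma~\ref{lem:mm_exp} into Theorem~\ref{thm:par_comm_lwb_bi}. For matrix multiplication we have $r^{(\Lambda)}=mnk$ and input/output sizes $(r^{(A)},r^{(B)},r^{(C)})=(mk,kn,mn)$. Since $\mathcal{E}_\mathrm{MM}$ is symmetric in its three arguments, I would relabel so that these three sizes become $(xy,xz,yz)$ where $(x,y,z)=(\min,\mathrm{median},\max)(m,n,k)$, noting $xyz=mnk$ and $xy\leq xz\leq yz$. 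Theorem~\ref{thm:par_comm_lwb_bi} then reduces the derivation to minimizing $c^{(A)}+c^{(B)}+c^{(C)}$ over $c^{(A)},c^{(B)},c^{(C)}\geq 0$ subject to
\[\bigl(c^{(A)}+xy/p\bigr)\bigl(c^{(B)}+xz/p\bigr)\bigl(c^{(C)}+yz/p\bigr) \ \geq\ (xyz/p)^2.\]

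Next I would solve this constrained optimization by a case split on which of the bounds $c^{(i)}\geq 0$ are active at the optimum. If none is active, Lagrangian stationarity forces the three factors to be equal, so each factor equals $(xyz/p)^{2/3}$; feasibility requires $(xyz/p)^{2/3}\geq yz/p$, i.e.\ $p\geq yz/x^2$, and in that regime the sum is $3(xyz/p)^{2/3}-(xy+xz+yz)/p=\Omega((xyz/p)^{2/3})$. When $p<yz/x^2$, the constraint on the largest factor $yz/p$ binds; setting $c^{(C)}=0$ and redoing Lagrangian stationarity on the remaining two factors yields $c^{(A)}+xy/p=c^{(B)}+xz/p=x(yz/p)^{1/2}$, which is feasible iff $x(yz/p)^{1/2}\geq xz/p$, i.e.\ $p\geq z/y$, producing the ``2D'' bound $\Omega(x(yz/p)^{1/2})$. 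Finally, when $p<z/y$ both $c^{(B)}=c^{(C)}=0$ saturate, and the constraint forces $c^{(A)}\geq xy-xy/p=\Omega(xy)$, giving the ``1D'' bound $\Omega(xy)$. A continuity check at the regime boundaries confirms consistency: at $p=yz/x^2$ both the 3D and 2D formulas evaluate to $x^2$, and at $p=z/y$ the 2D formula collapses to $xy$.

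The main obstacle is not any deep inequality (Loomis--Whitney has already been absorbed into $\mathcal{E}_\mathrm{MM}$) but careful bookkeeping in the case analysis. I would need to verify that the ordering $xy\leq xz\leq yz$ induced by $x\leq y\leq z$ really dictates which constraint binds first (so that relabeling by $(x,y,z)$ is valid), that the additive corrections $r^{(i)}/p$ subtracted from each factor are asymptotically dominated by the resulting $c^{(i)}$ in each regime (so the bound survives as $\Omega(\cdot)$), and that the integrality restriction $c^{(i)}\in\mathbb{N}$ in Theorem~\ref{thm:par_comm_lwb_bi} is harmless up to constants absorbed by $\Omega$.
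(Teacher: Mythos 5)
Your proposal is correct and follows essentially the same route as the paper: plug the Loomis--Whitney-derived expansion bound $\mathcal{E}_\mathrm{MM}$ into Theorem~\ref{thm:par_comm_lwb_bi}, relabel the dimensions as $(x,y,z)=(\min,\mathrm{median},\max)$, and split into the three regimes according to which factor of the constraint is binding. Your explicit Lagrangian/KKT case analysis is just a more detailed rendering of the paper's ``dominant term'' argument, and the caveats you flag (integrality, the $r^{(i)}/p$ offsets near regime boundaries) are the same asymptotic loosenesses the paper also absorbs into the $\Omega(\cdot)$.
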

\begin{proof}
By Lemma~\ref{lem:mm_exp}, we know that the classical matrix multiplication algorithm has expansion bound  $\mathcal{E}_\mathrm{MM}(d^{(A)},d^{(B)},d^{(C)})=(d^{(A)}d^{(B)}d^{(C)})^{1/2}$.
Applying Theorem~\ref{thm:par_comm_lwb_bi}, with this expansion bound, we obtain the bound,
\[W_\mathrm{MM}(m,n,k,p)\geq d^{(A)}+d^{(B)}+d^{(C)},\]
for some $d^{(A)},d^{(B)},d^{(C)}\in \mathbb{N}$ such that 
\[\displaystyle{mnk/p \leq [(d^{(A)}+mk/p)(d^{(B)}+kn/p)(d^{(C)}+mn/p)]^{1/2}}.\]
Letting $x=\min(m,n,k)$, $y=\mathrm{median}(m,n,k)$, and $z=\max(m,n,k)$, we rewrite the above as
\[W_\mathrm{O}(x,y,z,p)= d_1+d_2+d_3,\]
for some $d_1,d_2,d_3\in \mathbb{N}$ such that
\[\displaystyle{xyz/p \leq [(d_1+xy/p)(d_2+xz/p)(d_3+yz/p)]^{1/2}}.\]
The symmetry of the objective and constraint in $d_1,d_2,d_3$ tells us that due to $x\leq y \leq z$, the optimal solution must have $d_1\geq d_2 \geq d_3$.
Asymptotically, there are three scenarios that are distinguished by which term in the right-hand side of the constraint is greatest (dominant):
\begin{itemize}
\item $\sqrt{d_1d_2d_3}$ is the dominant term, $W_\mathrm{O}(x,y,z,p)=\Omega((xyz/p)^{2/3})$,
\item $\sqrt{d_1d_2yz/p}$ is the dominant term, $W_\mathrm{O}(x,y,z,p)=\Omega(x\sqrt{yz/p})$, 
\item $(z/p)\sqrt{d_1xy}$ is the dominant term, $W_\mathrm{O}(x,y,z,p)=\Omega(xy)$.
\end{itemize}
Computing the ranges in which each of these three terms implies the least communication yields the lower bounds given in the theorem.
\end{proof}
These matrix multiplication lower bounds can be interpreted geometrically.
They correspond to partitioning the cube of $mnk$ products in matrix multiplication in one, two, or three dimensions (1D, 2D, and 3D algorithms).

\subsection{Lower Bounds for Nonsymmetric Contractions}
\label{sec:comm_ns_de}

We now introduce communication lower bounds for the nonsymmetric contraction algorithm. 
As we explained after the statement of Algorithm~\ref{alg:nsctr}, $\nsalg{\C A}{\C B}stv$ is equivalent to a matrix multiplication of an $n^s\times n^v$ matrix with an $n^v\times n^t$ matrix yielding a $n^s\times n^t$ matrix. 
Therefore, its communication lower bounds have a direct correspondence to those of matrix multiplication.
\begin{theorem}\label{thm:comm_lowerb_ns}
Any schedule of $\nsalg{\C A}{\C B}stv$ 
on $\seqarch{\cs}$ has vertical communication cost,
\begin{align*}
 \bwcost_\Upsilon(n,s,t,v,\cachesize) 
            & \geq \max\lt[\frac{2n^{\omega}}{\sqrt{\cachesize}},
                  n^{s+t}+ n^{s+v}+ n^{v+t}\rt].
\end{align*}
\end{theorem}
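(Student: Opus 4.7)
The plan is to reduce this bound directly to the matrix multiplication lower bound of Theorem~\ref{thm:seq_comm_lwb_mm}, by exploiting the unfolding equivalence noted right after Algorithm~\ref{alg:nsctr}. Specifically, $\nsalg ABstv$ is algebraically identical to the classical matrix multiplication of an $n^s \times n^v$ matrix $\BB A$ by an $n^v \times n^t$ matrix $\BB B$ into an $n^s \times n^t$ matrix $\BB C$, where each row/column index packs a tuple from $\inti 1n^s$, $\inti 1n^v$, or $\inti 1n^t$. Since this unfolding is a bijection on the index sets (and on the product set $D^{(M)} = \inti 1n^\omega$), any execution DAG for $\nsalg ABstv$ on $\seqarch{\cs}$ is an execution DAG for a classical $n^s \times n^v$ by $n^v \times n^t$ matrix multiplication, and vice versa.

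From that point, the theorem follows by applying Theorem~\ref{thm:seq_comm_lwb_mm} with $m = n^s$, $n = n^t$, $k = n^v$. Substituting gives $mnk = n^{s+t+v} = n^{\omega}$, $mk = n^{s+v}$, $kn = n^{v+t}$, and $mn = n^{s+t}$, so
\[\bwcost_\mathrm{MM}(n^s,n^t,n^v,\cachesize) \geq \max\lt[\frac{2n^\omega}{\sqrt{\cachesize}}, n^{s+v} + n^{v+t} + n^{s+t}\rt],\]
which is the stated bound.

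As an equivalent alternative, one could derive the expansion bound directly in the bilinear-algorithm framework: the fill structure of $\mathcal K$ for $\nnsalg stv$ (Table~\ref{tab:bil}) places a single unit entry in each column of $\C{F^{(A)}}$, $\C{F^{(B)}}$, $\C{F^{(C)}}$, and each column is uniquely identified by a pair of indices from $\tpl js, \tpl kv, \tpl lt$. So for any $\bilalg R \subseteq \getmat{\nnsalg stv}$, the ranks $\rank(\C{R^{(A)}})$, $\rank(\C{R^{(B)}})$, $\rank(\C{R^{(C)}})$ are the sizes of the three pairwise projections of the underlying set $V \subseteq D^{(M)}$; Theorem~\ref{thm:otlw} (with $m=3, r=2$) then yields the same expansion bound $\mathcal E_\mathrm{MM}(d^{(A)}, d^{(B)}, d^{(C)}) = (d^{(A)} d^{(B)} d^{(C)})^{1/2}$ as in Lemma~\ref{lem:mm_exp}, and Theorem~\ref{thm:seq_comm_lwb_bi} gives the bound directly after the routine maximization $\mathcal E_\mathrm{MM}^\mathrm{max}(\cs) = \cs^{3/2}$.

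There is no real obstacle here; the content of the theorem is just the observation that index unfolding is a change of variables that preserves the bilinear-algorithm structure exactly. The only thing to be mildly careful about is confirming that the irreducibility and load-balance hypotheses of Theorem~\ref{thm:seq_comm_lwb_bi} (or Theorem~\ref{thm:seq_comm_lwb_mm}) carry over, which is immediate since the unfolding is a bijection between inputs, products, and outputs.
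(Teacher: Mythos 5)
Your proof is correct and follows essentially the same route as the paper: the paper also establishes the bound by identifying $\nsalg ABstv$ with a classical $n^s\times n^v$ by $n^v\times n^t$ matrix multiplication and invoking Theorem~\ref{thm:seq_comm_lwb_mm} with those dimensions. Your alternative derivation via the expansion bound is just an unrolling of how Theorem~\ref{thm:seq_comm_lwb_mm} is itself proved (Lemma~\ref{lem:mm_exp} plus Theorem~\ref{thm:seq_comm_lwb_bi}), so nothing substantively different is being done.
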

\begin{proof}
Since any matrix multiplication may be expressed as a direct evaluation nonsymmetric contraction algorithm, the communication cost of this nonsymmetric contraction algorithm cannot be lower than the optimal communication cost of the standard matrix multiplication algorithm that evaluates the $n^\omega$ products directly.
Therefore, $\bwcost_\Upsilon(n,s,t,v,\cachesize)\geq \bwcost_\mathrm{MM}(n^s,n^t,n^v,\cachesize)$, which yields the lower bound above.
%
\end{proof}

Similarly, we obtain a bound for horizontal communication cost below.
\begin{theorem}\label{thm:comm_lowerb_ns_prl}
Any storage-balanced schedule of $\nsalg{\C A}{\C B}stv$ 
on $\pararch{p}{M}$ has horizontal communication cost, 
\begin{align*}
W_\Upsilon(n,s,t,v,p)    =\Omega\lt(W_\mathrm{MM}\lt(n^s,n^t,n^v,p\rt)\rt).
\end{align*}
\end{theorem}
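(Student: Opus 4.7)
The plan is to mimic the reduction used in the sequential case (Theorem~\ref{thm:comm_lowerb_ns}), exploiting the fact that $\nsalg ABstv$ is, as explained after Algorithm~\ref{alg:nsctr}, literally a classical matrix multiplication once the tensors are unfolded. The $n^s \times n^v$ matrix $\BB A$, the $n^v \times n^t$ matrix $\BB B$, and the $n^s \times n^t$ matrix $\BB C$ have the same numbers of entries as $\B A$, $\B B$, and $\B C$ respectively, and the set of $n^\omega$ scalar products computed is exactly that of $\C A \C B$ in unfolded form. This gives a two-way correspondence: any schedule for $\nsalg ABstv$ yields a schedule for matrix multiplication of equal parameters (and vice versa) with identical horizontal communication.

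The main step is then to check that the \emph{storage-balanced} assumption transfers across the unfolding. Under a storage-balanced schedule for $\nsalg ABstv$ on $\pararch{p}{M}$, each processor holds $n^{s+v}/p$ entries of $\B A$, $n^{v+t}/p$ entries of $\B B$, and produces $n^{s+t}/p$ entries of $\B C$. After unfolding, these are exactly $(n^s n^v)/p$, $(n^v n^t)/p$, and $(n^s n^t)/p$ entries of $\BB A$, $\BB B$, and $\BB C$, respectively, so the induced schedule for matrix multiplication of an $n^s \times n^v$ matrix by an $n^v \times n^t$ matrix is also storage-balanced. Hence Theorem~\ref{thm:par_comm_lwb_mm} applies to this induced schedule, giving horizontal communication cost at least
\[ W_\mathrm{MM}(n^s, n^t, n^v, p), \]
and since the unfolding does not change the number of words sent or received by any processor, this bound transfers back to the original schedule.

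An entirely equivalent and perhaps cleaner route is to invoke Theorem~\ref{thm:par_comm_lwb_bi} directly: by the argument in Lemma~\ref{lem:mm_exp}, the bilinear algorithm $\getmat{\nnsalg stv}$ has the same sparsity pattern (one unit entry per column of each of $\C{K^{(A)}}, \C{K^{(B)}}, \C{K^{(C)}}$) as the matrix multiplication bilinear algorithm, only with the three axes having cardinalities $n^s, n^t, n^v$ rather than $m, n, k$. Thus $\mathcal{E}_\Upsilon(d^{(A)},d^{(B)},d^{(C)}) = (d^{(A)} d^{(B)} d^{(C)})^{1/2}$ is a valid expansion bound, and feeding this into Theorem~\ref{thm:par_comm_lwb_bi} with $r^{(\Lambda)} = n^\omega$, $r^{(A)} = n^{s+v}$, $r^{(B)} = n^{v+t}$, $r^{(C)} = n^{s+t}$ produces the same three-regime lower bound as in Theorem~\ref{thm:par_comm_lwb_mm}.

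The only subtlety I expect is bookkeeping around the regime analysis: the lower bound $W_\mathrm{MM}$ is defined in terms of $\min$, $\mathrm{median}$, $\max$ of its three size arguments, and since $n^s, n^t, n^v$ need not be ordered, one must be careful that the expansion-based argument really produces the $\min/\mathrm{median}/\max$ form rather than a permutation of it. This is not a real obstacle, because Theorem~\ref{thm:par_comm_lwb_mm} is already stated in a form symmetric over permutations of the three dimensions; hence no case analysis is actually needed beyond noting that the statement of the theorem is invariant under relabeling of $s, t, v$.
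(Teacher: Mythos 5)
Your proposal is correct and follows essentially the same route as the paper: the paper's proof is a one-line reduction observing that $\nsalg ABstv$ performs a matrix multiplication with dimensions $n^s$, $n^t$, $n^v$ under the unfolding, and then invokes Theorem~\ref{thm:par_comm_lwb_mm}. Your additional check that the storage-balanced hypothesis transfers across the unfolding, and your alternative phrasing via Theorem~\ref{thm:par_comm_lwb_bi} with the expansion bound of Lemma~\ref{lem:mm_exp}, simply spell out details the paper leaves implicit.
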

\begin{proof}
By the same argument as in the proof of Theorem~\ref{thm:comm_lowerb_ns}, this algorithm can be used
to perform a matrix multiplication with dimensions $n^s$, $n^t$, and $n^v$, hence the bound stated in the theorem.
\end{proof}

%
%
%
%
%

\section{Appendix III: Lower Bounds Direct Evaluation of Symmetric Contractions}
\label{sec:syde_lb}
In this appendix, we establish matrix-multiplication-like communication lower bounds for the direct evaluation algorithm for symmetric contractions.
We start by deriving an expansion bound for this bilinear algorithm.
Note that we have also derived another expansion bound and horizontal communication lower bound for this algorithm in Section~\ref{subsec:sym_de_lb_main}.
\begin{lemma}\label{lem:strd_exp}
An expansion bound 
for $\nsystalg stv$ is
\[\mathcal{E}^{(s,t,v)}_\Psi(d^{(A)},d^{(B)},d^{(C)})=q\lt(d^{(A)}d^{(B)}d^{(C)}\rt)^{1/2},\]
where $q\defeq \lt[{s+v \choose s}{v+t \choose v}{s+t \choose s}\rt]^{1/2}$.
\end{lemma}
\begin{proof}
For the bilinear tensor algorithm $\nbilalg stvDK= \nsystalg stv$, the products are specified by the following sparse tensors: 
\begin{alignat*}{5}
\forall &\tpl g{s+v}\in \enti 1n{s+v}, &&\tpl lt\in\enti 1nt, &&(\tpl js, \tpl kv)\in\chi^s_v(\tpl g{s+v}),  
\quad && K^{(A)}_{\tpl{g}{s+v}\tpl{j}{s}\tpl{l}{t}\tpl{k}{v}}=1, \\
\forall &\tpl h{v+t}\in \enti 1n{v+t}, &&\tpl js\in\enti 1ns, &&(\tpl kv, \tpl lt)\in\chi^v_t(\tpl h{v+t}),  
\quad && K^{(B)}_{\tpl{h}{v+t}\tpl{j}{s}\tpl{l}{t}\tpl{k}{v}}=1, \\
\forall &\tpl i{s+t}\in \enti 1n{s+t}, &&\tpl kv\in\enti 1nv, &&(\tpl js, \tpl lt)\in\chi^s_t(\tpl i{s+t}),
\quad && K^{(C)}_{\tpl{i}{s+t}\tpl{j}{s}\tpl{l}{t}\tpl{k}{v}}=s!t!\rho(\tpl kv).
\end{alignat*}
Consider any $\bilalg R\subseteq \getmat{\nbilalg stvDK}$ and the associated subset of products $V\subseteq D^{(M)}$, where $D^{(M)}=\enti{1}{n}s\enti{1}{n}t\enti{1}{n}v$.
Like in matrix multiplication, the columns of the matrices $\B{R}^{(A)}$, $\B{R}^{(B)}$, and $\B{R}^{(C)}$ (with elements contained in $\mathcal{{R}}$) 
each have a single non-zero entry (as each product has one operand from $\C A$, one from $\C B$ and contributes to one output in $\C C$) and so are only linearly dependent when they are equivalent.
Consider the three projections:
\begin{align*}
L_{(1,3)} = \{\tpl js\tpl kv : \tpl js\tpl lt \tpl kv\in V\},
L_{(2,3)} = \{\tpl kv\tpl lt : \tpl js\tpl lt \tpl kv\in V\},
L_{(1,2)} = \{\tpl js\tpl lt : \tpl js\tpl lt \tpl kv\in V\},
\end{align*}
The row corresponding to index $\tpl g{s+v}\in \enti 1n{s+v}$ is non-zero in $\B{R}^{(A)}$ if $\exists \tpl js\tpl kv\in L_{(1,3)}$ such that $(\tpl js, \tpl kv)\in \chi^s_v(\tpl g{s+v})$.
Since there is a unique such row $\tpl g{s+v}$ for each entry of $L_{(1,3)}$ and  $|\chi^s_v(\tpl g{s+v})| = {s+v\choose v}$, it follows that 
the number of non-zero rows in $\B{R}^{(A)}$ must be at least $d^{(A)}\defeq \rank(\B{R}^{(A)}) \geq |L_{(1,3)}|/{s+v \choose v}$.
Similarly for $\C B$ and $\C C$ with  $d^{(B)}\defeq \rank(\B{R}^{(B)})\geq |L_{(2,3)}|/{t+v \choose t}$ and $d^{(C)}\defeq \rank(\B{R}^{(C)})\geq |L_{(1,2)}|/{s+t \choose s}$.
By Theorem~\ref{thm:otlw} with $m=2$ and $r=3$, we then obtain 
\[|V| \leq (|L_{(1,2)}||L_{(1,3)}||L_{(2,3)}|)^{1/2} \leq \bigg({s+v\choose v}d^{(A)}{t+v \choose t} d^{(B)}{s+t \choose s} d^{(C)}\bigg)^{1/2},\]
so Definition~\ref{def:biexp} is satisfied for the expansion bound,
\[\mathcal{E}^{(s,t,v)}_\Psi(d^{(A)},d^{(B)},d^{(C)})= q\lt(d^{(A)}d^{(B)}d^{(C)}\rt)^{1/2}.\]
where $q= \lt[{s+v \choose s}{v+t \choose v}{s+t \choose s}\rt]^{1/2}$.
\end{proof}

\subsection{Vertical Communication Lower Bounds for Direct Evaluation of Symmetric Contractions}
\label{sec:seq_comm_drct}

We have shown that the expansion bound of the direct evaluation algorithm for symmetric contractions is larger than that of nonsymmetric contractions by a constant factor.
We use this expansion bound to derive communication lower bounds that are lower than the nonsymmetric case by a constant factor.
A simple example of when less vertical communication is needed for the symmetric contraction algorithm than the nonsymmetric one, is when all tensors fit into cache. 
In this case, it suffices to read each element of $\C A$ and $\C B$ once and write each element of $\C C$ once, which attains the below lower bound when $s=t=v$ and is more efficient than the respective matrix multiplication.
\begin{theorem}\label{thm:comm_lowerb_strd}
Any sequential schedule of $\systalg{\C A}{\C B}stv$ on $\seqarch{\cs}$ has vertical communication cost,

{\mathfootnotesize
\[\bwcost_\Psi(n,s,t,v,\cachesize)\geq \max\lt[\frac{2\chchoose{n}{s}\chchoose{n}{t}\chchoose{n}{v}}{q\sqrt{\cs}}, \chchoose{n}{s+v}+\chchoose{n}{v+t}+\chchoose{n}{s+t}\rt],\]
}
where $q=\lt[{s+v \choose s}{v+t \choose v}{s+t \choose s}\rt]^{1/2}$.
\end{theorem}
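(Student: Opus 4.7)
The plan is to obtain this vertical communication lower bound by a direct application of Theorem~\ref{thm:seq_comm_lwb_bi} to the bilinear algorithm $\nsystalg stv$, using the expansion bound supplied by Lemma~\ref{lem:strd_exp}. From Table~\ref{tab:bil_cst} we already know $\rank(\nsystalg stv)=\chchoose{n}{s}\chchoose{n}{t}\chchoose{n}{v}$ and $\dim(\nsystalg stv)=\bigl(\chchoose{n}{s+v},\chchoose{n}{v+t},\chchoose{n}{s+t}\bigr)$, which will directly feed the two arguments of the outer maximum in Theorem~\ref{thm:seq_comm_lwb_bi}.

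The main computation is the maximization
\[
\mathcal{E}_\Psi^{\mathrm{max}}(\cs)
  =\max_{c^{(A)}+c^{(B)}+c^{(C)}=3\cs}
    q\bigl(c^{(A)}c^{(B)}c^{(C)}\bigr)^{1/2}.
\]
By AM-GM the product $c^{(A)}c^{(B)}c^{(C)}$ with fixed sum $3\cs$ is maximized when $c^{(A)}=c^{(B)}=c^{(C)}=\cs$, giving $\mathcal{E}_\Psi^{\mathrm{max}}(\cs)=q\cs^{3/2}$. This function is strictly increasing and convex in $\cs$ for $\cs\geq 1$, so it satisfies the hypothesis of Theorem~\ref{thm:seq_comm_lwb_bi}.

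Substituting these values into the bound of Theorem~\ref{thm:seq_comm_lwb_bi} yields
\[
\bwcost_\Psi(n,s,t,v,\cs)\ \geq\
\max\!\left[\,\frac{2\,\chchoose{n}{s}\chchoose{n}{t}\chchoose{n}{v}\,\cs}{q\,\cs^{3/2}},\ \chchoose{n}{s+v}+\chchoose{n}{v+t}+\chchoose{n}{s+t}\right],
\]
and simplifying the first argument of the maximum by cancelling $\cs/\cs^{3/2}=1/\sqrt{\cs}$ gives exactly the stated bound. The only step that requires any care is verifying that the storage-balanced assumption is compatible with the input/output lower bound term $r^{(A)}+r^{(B)}+r^{(C)}$ from Theorem~\ref{thm:seq_comm_lwb_bi}; this is immediate because $\nsystalg stv$ is irreducible (every unique element of each symmetric operand appears in some product and every unique output element is a genuine bilinear form), so each of the $\chchoose{n}{s+v}$, $\chchoose{n}{v+t}$, and $\chchoose{n}{s+t}$ tensor entries must be read in or written out at least once. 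No step here is a genuine obstacle; the heavy lifting was already performed in Lemma~\ref{lem:strd_exp} via the Loomis–Whitney inequality.
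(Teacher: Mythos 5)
Your proposal is correct and follows essentially the same route as the paper's proof: apply Theorem~\ref{thm:seq_comm_lwb_bi} with the expansion bound of Lemma~\ref{lem:strd_exp}, compute $\mathcal{E}_\Psi^{\mathrm{max}}(\cs)=q\cs^{3/2}$ (strictly increasing and convex), and substitute the rank and dimensions of $\nsystalg stv$ to obtain the stated maximum. Your extra remarks on AM-GM and irreducibility only make explicit what the paper leaves implicit.
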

\begin{proof}
By lemma~\ref{lem:strd_exp}, we know that
\[\mathcal{E}^{(s,t,v)}_\Psi(d^{(A)},d^{(B)},d^{(C)})=q\lt(d^{(A)}d^{(B)}d^{(C)}\rt)^{1/2}.\]
Applying Theorem~\ref{thm:seq_comm_lwb_bi}, with this expansion bound, we obtain the communication lower bound,
{\mathfootnotesize
\[\bwcost_\Psi(n,s,t,v,\cachesize)\geq \max\lt[\frac{2\chchoose{n}{s}\chchoose{n}{t}\chchoose{n}{v} \cs}{\mathcal{E}_\Psi^\mathrm{max}(\cs)}, \chchoose{n}{s+v}+\chchoose{n}{v+t}+\chchoose{n}{s+t}\rt],\]
}
where
{\mathfootnotesize
\begin{align*}
\mathcal{E}_\Psi^\mathrm{max}(\cs) &= \max_{c^{(A)},c^{(B)},c^{(C)}\in \mathbb{N}, c^{(A)}+c^{(B)}+c^{(C)}\leq 3\cs}  q(c^{(A)}c^{(B)}c^{(C)})^{1/2}  = q\cs^{3/2},
\end{align*}
}
which is strictly increasing and convex for $\cs\geq 1$ as needed.
\end{proof}

\subsection{Horizontal Communication Lower Bounds for Direct Evaluation of Symmetric Contractions}
\label{sec:par_comm_drct}

Using the same expansion bound, we obtain horizontal communication cost lower bounds for the direct evaluation algorithm for symmetric contractions, which again match those of matrix multiplication.
\begin{theorem}\label{thm:comm_lowerb_syst_prl1}
Any storage-balanced schedule of $\systalg{\C A}{\C B}stv$
on $\pararch{p}{M}$ has horizontal communication cost, 
\[W_\Psi(n,s,t,v,p)=\Omega\lt(W_\mathrm{O}(n^{\min(s,t,v)},n^{\mathrm{median}(s,t,v)},n^{\max(s,t,v)},p)\rt),\]
where $W_\mathrm{O}$ is the same as given in Theorem~\ref{thm:par_comm_lwb_mm},
\[W_\mathrm{O}(n^x,n^y,n^z,p)= 
\begin{cases}
\lt(\frac{n^{x+y+z}}p\rt)^{2/3} & : p>n^{y+z-2x}.\\ 
n^x\lt(\frac{n^{y+z}}{p}\rt)^{1/2} & 
: n^{y+z-2x} \geq p > n^{z-y}. \\ 
n^{x+y}& 
: n^{z-y} \geq p. 
\end{cases}
\]
\end{theorem}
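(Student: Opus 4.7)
The plan is to apply Theorem~\ref{thm:par_comm_lwb_bi} with the expansion bound from Lemma~\ref{lem:strd_exp}, and then reduce the resulting feasibility problem to the one already solved inside the proof of Theorem~\ref{thm:par_comm_lwb_mm}. From Table~\ref{tab:bil_cst}, the canonical bilinear algorithm associated with $\nsystalg stv$ has $r^{(\Lambda)}=\chchoose{n}{s}\chchoose{n}{t}\chchoose{n}{v}$ and $(r^{(A)},r^{(B)},r^{(C)})=(\chchoose{n}{s+v},\chchoose{n}{v+t},\chchoose{n}{s+t})$, and by Lemma~\ref{lem:strd_exp} its expansion bound is $\mathcal{E}^{(s,t,v)}_\Psi(d^{(A)},d^{(B)},d^{(C)})=q(d^{(A)}d^{(B)}d^{(C)})^{1/2}$. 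Theorem~\ref{thm:par_comm_lwb_bi} then gives $W_\Psi(n,s,t,v,p)\geq c^{(A)}+c^{(B)}+c^{(C)}$ for some non-negative integers satisfying
\[\chchoose{n}{s}\chchoose{n}{t}\chchoose{n}{v}/p\ \leq\ q\lt[\lt(c^{(A)}+\chchoose{n}{s+v}/p\rt)\lt(c^{(B)}+\chchoose{n}{v+t}/p\rt)\lt(c^{(C)}+\chchoose{n}{s+t}/p\rt)\rt]^{1/2}.\]

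Since $s,t,v$ are held constant and $\chchoose{n}{d}=\Theta(n^d)$, I would absorb $q$ together with the binomial-to-monomial ratios into the $\Omega$, leaving (with $x=\min(s,t,v)$, $y=\mathrm{median}(s,t,v)$, $z=\max(s,t,v)$) the asymptotic constraint
\[n^{x+y+z}/p\ =\ O\lt(\lt[(c^{(A)}+n^{s+v}/p)(c^{(B)}+n^{v+t}/p)(c^{(C)}+n^{s+t}/p)\rt]^{1/2}\rt).\]
Up to a relabeling of $(c^{(A)},c^{(B)},c^{(C)})$, this is exactly the feasibility constraint that appears in the proof of Theorem~\ref{thm:par_comm_lwb_mm} after substituting the matrix dimensions $(n^s,n^v,n^t)$, whose $\min$, $\mathrm{median}$, $\max$ are $n^x$, $n^y$, $n^z$.

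I would then reuse the three-way case analysis from that earlier proof: expand the product on the right-hand side into four monomials and ask which is dominant. This yields three scenarios---the 3D regime $c^{(A)}+c^{(B)}+c^{(C)}=\Omega((n^{x+y+z}/p)^{2/3})$, the 2D regime $\Omega(n^x\sqrt{n^{y+z}/p})$, and the 1D regime $\Omega(n^{x+y})$---together with the ranges of $p$ on which each is tightest, giving the piecewise formula $W_\mathrm{O}(n^x,n^y,n^z,p)$ stated in the theorem.

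The main obstacle is verifying that the asymmetry between the three input sizes $\{n^{s+v},n^{v+t},n^{s+t}\}$ does not distort the regime thresholds relative to the matrix-multiplication case. This is handled just as in Theorem~\ref{thm:par_comm_lwb_mm}: since the expansion bound and the objective are both symmetric in $c^{(A)},c^{(B)},c^{(C)}$, the optimal allocation orders the $c$'s inversely to the sorted input sizes, so the regime cutoffs depend only on the multiset $\{n^{s+v},n^{v+t},n^{s+t}\}=\{n^{x+y},n^{x+z},n^{y+z}\}$, and the rest is bookkeeping over piecewise ranges of $p$.
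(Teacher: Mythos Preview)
Your proposal is correct and follows essentially the same approach as the paper: apply Theorem~\ref{thm:par_comm_lwb_bi} with the expansion bound $\mathcal{E}^{(s,t,v)}_\Psi$ from Lemma~\ref{lem:strd_exp}, absorb the constant $q$ and the ratios $\chchoose{n}{d}/n^d$ into the asymptotic notation, and then reuse the three-regime case analysis from the proof of Theorem~\ref{thm:par_comm_lwb_mm} after sorting $\{s,t,v\}$ into $x\leq y\leq z$. The paper's proof is nearly identical in structure and content; your added paragraph about the ordering of the $c$'s relative to the sorted input sizes is a slightly more explicit justification of a step the paper leaves implicit.
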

\begin{proof}
By Lemma~\ref{lem:strd_exp}, we have $\mathcal{E}^{(s,t,v)}_\Psi(d^{(A)},d^{(B)},d^{(C)})=q(d^{(A)}d^{(B)}d^{(C)})^{1/2}$.
Applying Theorem~\ref{thm:par_comm_lwb_bi}, with this expansion bound, we obtain the bound,
\[W_\Psi(n,s,t,v,p)\geq d^{(A)}+d^{(B)}+d^{(C)}\]
for some $d^{(A)},d^{(B)},d^{(C)}\in \mathbb{N}$ such that
\[\displaystyle{\frac{\chchoose{n}{s}\chchoose{n}{t}\chchoose{n}{v}}{p} \leq q\lt[\lt(d^{(A)}+\frac{\tchchoose{n}{s+v}}p\rt)\lt(d^{(B)}+\frac{\tchchoose{n}{v+t}}p\rt)\lt(d^{(C)}+\frac{\tchchoose{n}{s+t}}p\rt)\rt]^{1/2}}.\]
This bound is very similar to what we encounter for matrix multiplication in the proof of Theorem~\ref{thm:par_comm_lwb_mm}, so we proceed accordingly.
Let $x=\min(s,t,v)$, $y=\mathrm{median}(s,t,v)$, and $z=\max(s,t,v)$, so that $x\leq y \leq z$ and rewrite the above objective function as
\[W_\mathrm{O}(n^x,n^y,n^z,p)\geq d_1+d_2+d_3,\]
and constraint function as
\[\displaystyle{\frac{\tchchoose{n}{x}\tchchoose{n}{y}\tchchoose{n}{z}}{p} \leq q\lt[\lt(d_1+\frac{\tchchoose{n}{x+y}}p\rt)\lt(d_2+\frac{\tchchoose{n}{x+z}}p\rt)\lt(d_3+\frac{\tchchoose{n}{y+z}}p\rt)\rt]^{1/2}}.\]
When $x,y,z>0$, we again have three possible asymptotically dominant terms in the optimal solution,
\hspace*{-.5in}
\begin{itemize}
\item if $q\sqrt{d_1d_2d_3}$ is the dominant term, \\ $W_\mathrm{O}(n^x,n^y,n^z,p)=\Omega((n^{x+y+z}/p)^{2/3})$,
\item if $q\sqrt{d_1d_2n^{y+z}/p}$ is the dominant term,  $W_\mathrm{O}(n^x,n^y,n^z,p)=\Omega(n^x\sqrt{n^{y+z}/p})$, 
\item if $q(n^z/p)\sqrt{d_1n^{x+y}}$ is the dominant term,  $W_\mathrm{O}(n^x,n^y,n^z,p)=\Omega(n^{x+y})$.
\end{itemize}
\end{proof}


\end{document}